\def\dOi{11(3:2)2015}
\theoremstyle{plain}
\newtheorem{lemma}[thm]{Lemma}
\newtheorem{proposition}[thm]{Proposition}
\newtheorem{theorem}[thm]{Theorem}
\newtheorem{corollary}[thm]{Corollary}
\theoremstyle{definition}
\newtheorem{definition}[thm]{Definition}
\newtheorem{example}[thm]{Example}
\newtheorem{remark}[thm]{Remark}
\newtheorem{assumption}[thm]{Assumption}
\theoremstyle{plain}
\renewcommand{\varepsilon}{\epsilon}
\newcommand{\defiEnd}{\hspace{\stretch{1}}\ensuremath{\triangleleft}}
\newcommand{\cat}[1]{\mathsf{#1}}
\newcommand{\bb}[1]{\mathbb{#1}}
\newcommand{\C}{\cat{C}} 
\newcommand{\A}{\cat{A}}
\newcommand{\Set}{\cat{Set}}   
\newcommand{\Pos}{\cat{Pos}}   
\newcommand{\Coalg}{\mathsf{Coalg}}
\newcommand{\Bialg}{\mathsf{Bialg}}
\newcommand{\Alg}[1]{\cat{Alg}(#1)} 
\newcommand{\Algeq}[2]{\cat{Alg}(#1,#2)} 
\newcommand{\ext}[2]{#1^{#2}}
\newcommand{\Pow}{\mathcal{P}}
\newcommand{\Id}{\mathrm{Id}}
\newcommand{\id}{\mathrm{id}}
\newcommand{\E}{\mathcal{E}} 
\newcommand{\T}{\mathcal{T}}
\newcommand{\K}{\mathcal{K}}
\newcommand{\eql}{l}
\newcommand{\eqr}{r}
\newcommand{\bbN}{\mathbb{N}}
\newcommand{\bbR}{\mathbb{R}}
\newcommand{\bbZ}{\mathbb{Z}}
\newcommand{\twobb}{\mathbbm{2}}
\renewcommand{\to}{\rightarrow}
\newcommand{\To}{\Rightarrow}
\newcommand{\Eq}[1]{\equiv_{#1}}  
\newcommand{\Rel}[1]{\overline{#1}}  
\newcommand{\quotE}[1]{{#1}_\E}
\newcommand{\abstractgoes}[2]{%
\setbox0=\hbox{\ ${\scriptstyle#2}$\ }
\ifdim\wd0<12pt\wd0=12pt\fi
\mathrel{\stackrel{#2}{\rule[2.2pt]{\wd0}{0.6pt}}\mkern-16mu{#1}}
}
\newcommand{\sse}{\subseteq}
\newcommand{\tup}[1]{\langle #1 \rangle}
\newcommand{\ol}[1]{\overline{#1}}  
\newcommand{\sig}{\sigma} 
\newcommand{\X}{\mathtt{X}} 
\newcommand{\cns}[1]{\mathtt{#1}} 
\newcommand{\Sig}{\Sigma} 
\begin{document}





\title[Presenting Distributive Laws]{Presenting Distributive Laws}

\author[M.~M.~Bonsangue]{Marcello M. Bonsangue\rsuper a}	
\address{{\lsuper{a,d}}LIACS -- Leiden University, Netherlands, \and Formal Methods -- Centrum Wiskunde \& Informatica, Amsterdam, Netherlands}	 
\email{marcello@liacs.nl, j.c.rot@liacs.leidenuniv.nl}

\author[H.~H.]{Helle H. Hansen\rsuper b}	
\address{{\lsuper b}ESS/TBM -- Delft University of Technology, Netherlands, \and Formal Methods -- Centrum Wiskunde \& Informatica, Amsterdam, Netherlands}	
\email{h.h.hansen@tudelft.nl}  
\thanks{{\lsuper b}The research of Helle H. Hansen has been funded by the Netherlands Organisation for
Scientific Research (NWO), Veni project number 639.021.231.}	

\author[A.~Kurz]{Alexander Kurz\rsuper c}	
\address{{\lsuper c}Dep.~of Computer Science -- University of Leicester, United Kingdom}	
\email{ak155@le.ac.uk}

\author[J.~Rot]{Jurriaan~Rot\rsuper d}	
\address{\vspace{-18 pt}}	 
\thanks{{\lsuper d}The research of Jurriaan~Rot has been funded by the Netherlands Organisation for
Scientific Research (NWO), CoRE project, dossier number: 612.063.920.}	




\keywords{Coalgebra, algebra, distributive laws, abstract GSOS, monad, equational presentation}


\begin{abstract}
Distributive laws of a monad $\T$ over a functor $F$
are categorical tools for specifying algebra-coalgebra interaction.
They proved to be important for
solving systems of corecursive equations, for the
specification of well-behaved structural operational semantics and, more recently, also
for enhancements of the bisimulation proof method.
If $\T$ is a free monad, then such distributive laws correspond to simple natural
transformations. However, when $\T$ is not free it can be rather
difficult to prove the defining axioms of a distributive law. In this paper we describe
how to obtain a distributive law for a monad with an equational presentation from a
distributive law for the underlying free monad.
We apply this result to show the equivalence between two different representations of context-free languages.
\end{abstract}

\maketitle




\section{Introduction}
The combination of algebraic structure and observable behaviour
is fundamental in computer science.
Examples include the operational models of structural
operational semantics~\cite{Aceto:SOS-HB},
denotational models of programming languages~\cite{winskel},
finite stream circuits~\cite{Milius10},
linear and context-free systems of
behavioural differential equations~\cite{Rut03:TCS-bde,WBR:CF-CALCO},
and many types of automata such as nondeterministic and
weighted automata~\cite{SilvaBBR10}.

In the categorical treatment of these examples,
the algebraic structure is encoded by a monad $\T = \tup{T,\eta,\mu}$, and
the system behaviour by coalgebras for a functor $F$.
Often it is desirable that the algebraic and coalgebraic structure
are compatible in some way. A general approach to specifying such
algebra-coalgebra interaction is via a distributive law.
There are several advantages of this structured approach.
A distributive law $\lambda$ of the monad $\T$ over $F$
induces a $\T$-algebra on the final $F$-coalgebra of behaviours,
yields solutions to corecursive equations $\phi\colon X \to FTX$ ~\cite{Bartels:PhD},
and ensures that bisimulation is a congruence~\cite{TuriPlotkin:LICS-GSOS}.
Moreover, it yields the soundness of techniques such as
bisimulation-up-to-context~\cite{Bartels:PhD} and extensions
thereof~\cite{RBBRS13,RBR13}.

Describing a distributive law explicitly and proving that it is one can,
however, be rather complicated. Therefore, general methods
for constructing distributive laws from simpler ingredients are very useful.
An important example of this is given by
abstract GSOS~\cite{TuriPlotkin:LICS-GSOS,Bartels:PhD,LPW2004:cat-sos}
where distributive laws of a free monad $\T$ over a (copointed) functor $F$
are shown to correspond to plain natural transformations,
called \emph{abstract GSOS-rules} as
they can be seen as specification formats.
In \cite{HK:pointwise} it was shown how
an abstract GSOS-rule for a free monad $\T$ and functor $F$
can be lifted to one for the functor $F(-)^A$
which describes $F$-systems with input in $A$.
Another method which works for all monads $\T$, but only for certain polynomial
behaviour functors $F$, produces a distributive law inducing
a ``pointwise lifting'' of $\T$-algebra structure to $F$-behaviours,
cf.~\cite{Jacobs:bialg-dfa-regex,Jacobs06,SilvaBBR10}.

But many examples do not fit into the abovementioned settings.
An important motivating example for this paper is that of context-free grammars,
where sequential composition is not a pointwise operation and
whose formal semantics satisfies the axioms of idempotent semirings,
i.e., the algebraic structure is not free.
More generally, one may be interested in a monad arising from a free one
by adding equations which one knows to hold in the final coalgebra.

The main contribution of this paper is to give a general approach
for constructing a distributive law $\lambda'$
for a monad $\T'$ with an equational presentation,
from a distributive law $\lambda$ for the underlying free monad $\T$.
We have no constraints on the behaviour functor $F$.
This $\lambda'$ is obtained as a certain quotient of $\lambda$
by the equations $\E$ of $\T'$, hence we say that
\emph{$\lambda'$ is presented by a $\lambda$ for the free monad and the equations $\E$}.
We show that such quotients exist precisely when the distributive law \emph{preserves the equations $\E$},
which roughly means that congruences generated by the equations are
bisimulations.
We also discuss how these quotients of distributive laws give rise to
quotients of bialgebras, thereby giving a concrete operational interpretation,
and a correspondence between solutions to corecursive equations with
and without equations.
As an illustration and application of our theory, we show the existence of a distributive law
of the monad for idempotent semirings over the deterministic automata functor.
This result yields the equivalence betweeen the Greibach normal form representation of context-free
languages and the coalgebraic representation via context-free expressions given in~\cite{WBR:CF-CALCO}.

\textit{Outline.} In Section~\ref{sec:quot-monad}, we recall the notions of monads and algebras, and
give a concrete description of monad quotients.
In Section~\ref{sec:dls}, we recall distributive laws and their application to solving
systems of equations. Then, in Section~\ref{sec:quotients-of-dls}, we prove our main results
on quotients of distributive laws. In Section~\ref{sec:quotients-of-bialgebras},
we show that such quotients give rise to quotients of bialgebras. Finally, in Section~\ref{sec:conc},
we discuss related work, and provide some directions for future work.

This paper is an extended version of~\cite{BHKR13}. It contains all proofs and generalises the main
results of~\cite{BHKR13} from monads on the category of $\Set$ to monads on arbitrary categories
(with the appropriate structure in the category of algebras). Furthermore, this paper includes
the treatment of distributive laws of a monad over a comonad, allowing more general specification
formats that involve look-ahead in the premises.

\textbf{Acknowledgements.}
We thank Neil Ghani, Bart Jacobs, Jan Rutten, Ji\v{r}\'{i}  Velebil and Joost Winter for helpful discussions
and suggestions. We are indebted to the referees for their constructive
comments, which greatly helped us improving the paper.


\section{Monads, Algebras and Equations}\label{sec:quot-monad}

We start by recalling some basic definitions on monads, algebras, term equations
and congruences (see, e.g., ~\cite{BW05,ARV11} for a
detailed introduction). We will then proceed to give a concrete description of the
quotient monad arising from a free monad and a set of equations.

Let $\C$ be a category.
A \emph{monad} is a triple $\T = \tup{T, \eta, \mu}$ where
$T$ is an endofunctor on $\C$, and $\eta \colon \Id \Rightarrow T$ and $\mu \colon TT \Rightarrow T$ are natural transformations such
that $\mu \circ T \eta = \id = \mu \circ \eta_T$ and $\mu \circ \mu_T = \mu \circ T \mu$.
A \emph{$\T$-algebra} is a pair $\tup{A, \alpha}$ where $A$ is a $\C$-object and $\alpha \colon TA \rightarrow A$ is an arrow such that
$\alpha \circ \eta_A = \id$ and $\alpha \circ \mu_A = \alpha \circ T\alpha$.
A \emph{($\T$-algebra) homomorphism} from $\tup{A, \alpha}$ to $\tup{B, \beta}$
is an arrow $f \colon A \rightarrow B$ such that $f \circ \alpha = \beta \circ Tf$.
The \emph{free $\T$-algebra} over a $\C$-object $X$ is $\tup{TX,\mu_X}$. Given any $\T$-algebra $\tup{A, \alpha}$
and any arrow $f \colon X \rightarrow A$,
there is a unique algebra homomorphism $f^\sharp \colon TX \rightarrow A$ such that
$f^\sharp \circ \eta_X = f$, given by $f^\sharp = \alpha \circ Tf$.
We denote the category of $\T$-algebras and their homomorphisms by $\Alg{\T}$, and the associated forgetful functor
by $U \colon \Alg{\T} \rightarrow \C$.

Let $\tup{T,\eta,\mu}$ and $\tup{K,\theta,\nu}$ be monads. A \emph{morphism of monads} is a natural transformation $\sigma \colon T \Rightarrow K$ such
that the following diagram commutes:
\begin{equation}\label{eq:monadmap}
\xymatrix{
\Id \ar[r]^-{\eta} \ar[dr]_-{\theta}
 & T \ar[d]^-{\sigma}
 & TT \ar[l]_\mu \ar[d]^{\sigma \sigma}
\\
 & K
 & KK \ar[l]_{\nu}
}
\end{equation}
where $\sigma\sigma = K\sigma \circ \sigma_T = \sigma_K \circ T \sigma$.

Assume we are given a monad $\T = \tup{T, \eta, \mu}$ on some category $\C$.
We define \emph{$\T$-equations} as a 3-tuple $\E = \tup{E, \eql, \eqr}$ where $E$ is an endofunctor on $\C$ and
$\eql, \eqr \colon E \Rightarrow T$ are natural transformations.
The intuition is that $E$ models
the arity of the equations, and $l$ and $r$ give the left and right-hand side, respectively. This
is illustrated below by an example.

\begin{example}\label{ex:equations}
  Consider the $\Set$ functor $\Sigma X = X \times X + 1$, modelling a binary operation and a constant, which we call $+$ and $0$ respectively. The (underlying functor of the) free monad
  $T_\Sigma$ for $\Sigma$ sends a set $X$ to the terms over $X$ built from $+$ and $0$. The equations $x+0 = x$, $x+y = y+x$ and $(x+y)+z = x + (y + z)$ can
  be modelled as follows. The functor $E$ is defined as $EX = X + (X \times X) + (X \times X \times X)$. The natural transformations $\eql,\eqr \colon E \To T_\Sigma$ are
  given by $\eql_X(x) = x+0$ and $\eqr_X(x) = x$ for all $x \in X$; $\eql_X(x,y)=x+y$ and $\eqr_X(x,y) = y+x$ for all $(x,y) \in X \times X$;
 $\eql_X(x,y,z) = x+(y+z)$ and $\eqr_X(x,y,z) = (x+y)+z$ for all $(x,y,z) \in X \times X \times X$.
\end{example}

Throughout this paper we will need assumptions on $\C$, $\T$, and $\E$. For this section we only need the following.

\begin{assumption}\label{assumps}
We assume that $\T$ is a monad on $\C$, and $E\colon \C \to \C$ is a functor such that:
\begin{enumerate}
\item $\Alg{\T}$ has coequalizers.
\item $U$ and $TU$ map regular epis in $\Alg{\T}$ to epis in $\C$.
\item $EU$ maps regular epis in $\Alg{\T}$ to epis in $\C$.

\end{enumerate}
The first condition is needed to construct quotients of free algebras modulo equations.
%
The second condition relates quotients of algebras (regular epis) with quotients in the base category (epis). It is satisfied if either $U$ preserves regular epis or if $U$ preserves epis.
%
Finally, the last condition is satisfied if  $E$ preserves
epimorphisms in $\C$.

\end{assumption}

\begin{example}\label{exle:assumptions}\
\begin{enumerate}
\item  If $\C=\Set$ the conditions are satisfied for any monad $\T$ and endofunctor $E$.
\item If $\C$ is the category $\Pos$ of posets with monotone maps  the second condition may fail, see Example~6 in \cite{Bloom-Wright}, which can be adapted to show that the monad induced by the adjunction $\twobb\times(-)\dashv(-)^\twobb\colon\Pos\to\Pos$ does map some regular epis to non-epis.
\item\label{exle:assumptions:3} If $\C$ is abelian groups and $T$ is the identity, the well-known fact that torsion-free abelian groups are not monadic over $\Set$ \cite{Borceux:HB-vol2} can be adapted to give an example of equations $E$ that fail condition 3. Let $\bbZ_n$ denote the group of integers with addition modulo $n$. Define $E:\C\to\C$ as the coproduct of abelian groups $E(A)=\coprod_{n\in\bbN}\C(\bbZ_n,A)$ and define $l,r\colon E(A)\to A$ by $l(g)=g(1)$ and $r(g)=0$.  Note that there is $g\colon\bbZ_n\to A$ only if $n\cdot g(1)=0$, that is, only if $g(1)$ `has torsion'. The equations then force the quotient of $A$ to be torsion-free
(i.e., no element different than 0 has torsion). Since $E(\bbZ)=1$ and $E(\bbZ_2)$ is infinite, the epi $\bbZ\to\bbZ_2$ is not preserved. $\Alg{T,\E}$ is the category of torsion-free abelian groups.
\end{enumerate}
\end{example}

\medskip\noindent Let $\bb{A} = \tup{A,\alpha}$ be a $\T$-algebra.
We denote by $s_\bb{A}$ the coequalizer of $l^\sharp_A\circ\alpha$ and $r^\sharp_A\circ\alpha$
in $\Alg{\T}$ depicted in the following diagram (we suppress the forgetful functor
and denote by the same symbol both an algebra morphism and its underlying $\C$-arrow)
\begin{equation}\label{eq:def-s}
\xymatrix{
  \tup{TEA,\mu_{EA}} \ar@<2px>[r]^-{\eql_A^\sharp} \ar@<-2px>[r]_-{\eqr_A^\sharp} & \tup{TA,\alpha} \ar[r]^-{\alpha} & \tup{A,\alpha} \ar[r]^-{s_{\bb{A}}} & \tup{A/\E,\quotE{\alpha}}\,.
}
\end{equation}
In the case $\C=\Set$, this entails that $\ker(s_\bb{A})$ is the \emph{congruence} generated by
the set $E_\bb{A} = \{(\alpha(l_A(e)),\alpha(r_A(e))\mid e\in EA\}$, i.e., by the least equivalence
relation on $A$ that includes $E_\bb{A}$ and is a subalgebra of $\tup{A,\alpha} \times \tup{A,\alpha}$.
In this sense, the kernel pair of a morphism always yields a congruence, and conversely, every
congruence relation on an algebra $\tup{A,\alpha}$ is the kernel of the corresponding quotient
homomorphism.
In general, the coequaliser~\eqref{eq:def-s} 
in $\Alg{\T}$ differs from the one obtained by applying the forgetful functor $U$ and then computing the coequaliser of 
$\alpha\circ l_A^\sharp$ and $\alpha\circ r_A^\sharp$ in $\Set$.
The coequalisers in $\Alg{T}$ and $\Set$ coincide if the equations
are reflexive in the sense that the two parallel arrows
$\alpha\circ l_A$ and $\alpha\circ r_A$ from $EA$ to $A$
have a common section, and the forgetful functor $U$ preserves reflexive coequalisers.
If $T$ is finitary, then $U$ preserves reflexive coequalisers.
Moreover, we note that if $U$ preserves reflexive coequalisers
then $T$ preserves them too, but not every $\Set$-functor
preserves reflexive coequalisers, cf.~\cite[Example 4.3]{AKV00}.

A $\T$-algebra $\bb{A} = \tup{A,\alpha}$ is said to \emph{satisfy} $\E$ if the following
diagram commutes:
$$
\xymatrix{
  EA \ar@<2px>[r]^-{\eql_A} \ar@<-2px>[r]_-{\eqr_A} & TA \ar[r]^{\alpha} & A \, .
}
$$
By $\Alg{\T,\E}$ we denote the full subcategory of $\T$-algebras that satisfy $\E$.
As coequalisers are unique only up to isomorphism, we will choose $s$ such that for
all $\bb{A}\in \Alg{\T,\E}$, $s_{\bb{A}} = \id_\bb{A}$.

\begin{lemma}\label{lm:adj}
The inclusion $V\colon \Alg{\T,\E}\to\Alg{\T}$ has a left-adjoint $H\colon \Alg{\T} \to \Alg{\T,\E}$
with unit $\ol{\eta}_\bb{A} = s_\bb{A}:\tup{A,\alpha} \to \tup{A/\E,\quotE{\alpha}}$ for all $\bb{A} \in \Alg{\T}$, and
counit $\ol{\epsilon}_\bb{A} = \id_\bb{A}$ the identity for all $\bb{A} \in \Alg{\T,\E}$.
In particular, $\Alg{\T,\E}$ is a full, reflective subcategory of $\Alg{\T}$.
\end{lemma}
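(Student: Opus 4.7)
\begin{proofof}{the plan for Lemma~\ref{lm:adj}}
The plan is to verify three things: (i) that $H\bb{A} = \tup{A/\E,\quotE{\alpha}}$ genuinely lies in $\Alg{\T,\E}$ so that the functor $H$ is well defined, (ii) that the arrow $s_\bb{A}\colon\bb{A}\to VH\bb{A}$ is universal from $\bb{A}$ to $V$, which simultaneously gives functoriality of $H$, naturality of $\ol{\eta}$, and the adjunction $H\dashv V$, and (iii) that on algebras already satisfying $\E$ the chosen coequaliser is the identity, so the counit is trivial and $V$ is fully faithful. What follows sketches each step.

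For (i), I want to check $\quotE{\alpha}\circ l_{A/\E}=\quotE{\alpha}\circ r_{A/\E}$. Using naturality of $l$ and $r$ together with the fact that $s_\bb{A}$ is a $\T$-algebra morphism, both sides composed with $Es_\bb{A}$ can be rewritten as $s_\bb{A}\circ\alpha\circ l_A$ and $s_\bb{A}\circ\alpha\circ r_A$ respectively. These coincide because $s_\bb{A}\circ\alpha$ coequalises $l_A^\sharp$ and $r_A^\sharp$ (by definition of $s_\bb{A}$), and therefore coequalises their precompositions with $\eta_{EA}$, which are precisely $l_A$ and $r_A$. To cancel $Es_\bb{A}$ I invoke Assumption~\ref{assumps}(3): since $s_\bb{A}$ is a regular epi in $\Alg{\T}$ and $EU$ sends regular epis in $\Alg{\T}$ to epis in $\C$, the arrow $Es_\bb{A}$ is epi and can be cancelled. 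This is the step that genuinely uses condition~3, and it is the main obstacle: without preservation of the relevant epi by $E$, the quotient $H\bb{A}$ need not satisfy $\E$, as illustrated by Example~\ref{exle:assumptions}(\ref{exle:assumptions:3}).

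For (ii), given $\bb{B} = \tup{B,\beta}\in\Alg{\T,\E}$ and a $\T$-algebra morphism $f\colon\bb{A}\to V\bb{B}$, I must produce a unique $\bar f\colon H\bb{A}\to\bb{B}$ with $V\bar f\circ s_\bb{A}=f$. The key observation is that $f$ coequalises $\alpha\circ l_A^\sharp$ and $\alpha\circ r_A^\sharp$. Indeed, using that $f$ is an algebra morphism and the naturality of $l,r$, one computes $f\circ\alpha\circ l_A=\beta\circ l_B\circ Ef$ and $f\circ\alpha\circ r_A=\beta\circ r_B\circ Ef$; since $\bb{B}$ satisfies $\E$ we have $\beta\circ l_B=\beta\circ r_B$, so the two agree on the generators. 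Both $f\circ\alpha\circ l_A^\sharp$ and $f\circ\alpha\circ r_A^\sharp$ are $\T$-algebra morphisms out of the free algebra $\tup{TEA,\mu_{EA}}$ whose restrictions along $\eta_{EA}$ coincide, hence they are equal by the universal property of the free algebra. The coequaliser property of $s_\bb{A}$ in $\Alg{\T}$ then delivers a unique $\T$-algebra morphism $\bar f\colon H\bb{A}\to\bb{B}$ with $\bar f\circ s_\bb{A}=f$, and since $\bb{B}$ lies in $\Alg{\T,\E}$ this $\bar f$ is also a morphism in $\Alg{\T,\E}$. Uniqueness is immediate from $s_\bb{A}$ being (regular) epi in $\Alg{\T}$.

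For (iii), the convention that $s_\bb{A}=\id_\bb{A}$ whenever $\bb{A}\in\Alg{\T,\E}$ (a legitimate choice since $\alpha$ already coequalises $l_A^\sharp$ and $r_A^\sharp$) makes the composite $HV$ equal to the identity on $\Alg{\T,\E}$, so the counit $\ol{\epsilon}$ is the identity and $V$ is fully faithful; equivalently, $\Alg{\T,\E}$ is a full reflective subcategory of $\Alg{\T}$. Functoriality of $H$ and naturality of $\ol{\eta}$ follow from the universal property established in (ii) by the standard argument, and the triangle identities hold by construction.
\end{proofof}
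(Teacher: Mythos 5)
Your proposal is correct and follows essentially the same route as the paper: show the quotient satisfies $\E$ by cancelling the epi $Es_\bb{A}$ (Assumption~\ref{assumps}(3)), establish the universal property of $s_\bb{A}$ using that $\bb{B}$ satisfies $\E$ together with the coequaliser property, and read off the identity counit from the convention $s_\bb{A}=\id_\bb{A}$ on $\Alg{\T,\E}$. Your explicit appeal to the free-algebra extension property to pass between $\alpha\circ l_A$ and $\alpha\circ l_A^\sharp$ is a small point the paper glosses over, but otherwise the arguments coincide.
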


\begin{proof}
We first show that for any $\bb{A} = \tup{A,\alpha}$ in $\Alg{\T}$,
$\tup{A/\E,\quotE{\alpha}}$ is an object in $\Alg{\T,\E}$.  Consider the following diagram:
\[
\xymatrix{
TEA \ar@(r,u)@<2px>[dr]^{\eql_{A}^\sharp} \ar@(r,u)@<-2px>[dr]_{\eqr_{A}^\sharp} & & \\
EA \ar[d]_-{Es_{\bb A}} \ar[u]^{\eta_{EA}}\ar@<2px>[r]^{\eql_{A}} \ar@<-2px>[r]_-{\eqr_{A}} & TA \ar[d]^{Ts_{\bb A}}\ar[r]^{\alpha} & A \ar[d]^{s_{\bb A}} \\
EA/\E \ar@<2px>[r]^-{\eql_{A/\E}} \ar@<-2px>[r]_-{\eqr_{A/\E}} & TA/\E \ar[r]^{\quotE{\alpha}} & A/\E
}
\]
The right-hand square commutes by the definition of $s_\bb{A}$, cf.~\eqref{eq:def-s}.
The left squares (for $l$ and $r$ respectively)
commute by naturality of $l$ and $r$. The upper two paths from $TEA$ to $A/\E$ commute by definition of $s_\bb{A}$.
From the above diagram we obtain
$\quotE{\alpha} \circ \eql_{A/\E} \circ E(s_\bb{A}) = \quotE{\alpha} \circ \eqr_{A/\E} \circ E(s_\bb{A})$ which implies
$\quotE{\alpha} \circ \eql_{A/\E} = \quotE{\alpha} \circ \eqr_{A}$ since $E(s_\bb{A})$ is epic (cf.~Assumption~\ref{assumps}).

It remains to show that for $\bb{B}=\tup{B,\beta}$ in $\Algeq{\T}{\E}$ and an algebra morphism $f:A\to B$ there is a unique algebra morphism $g:A/\E\to B$ such that $g\circ s_\bb A = f$. Since $\bb{B}$ satisfies the equations we know $\beta\circ l_B=\beta\circ r_B$, hence $f\circ\alpha\circ l_A=f\circ\alpha\circ r_A$. Since $s_\bb{A}:A\to A/\E$ is the coequalizer of $(\alpha\circ l_A,\alpha\circ r_A)$ the claim follows. The situation is illustrated here:
\[
\xymatrix{
EA \ar@<2px>[r]^-{~~\eql_{A}} \ar@<-2px>[r]_-{\eqr_{A}} \ar[d]_-{Ef} & TA \ar[d]_-{Tf}\ar[r]^{\alpha} & A \ar[d]_-{f} \ar[r]^-{s_\bb{A}} & A/\E \ar@{-->}[dl]^-{g}\\
EB \ar@<2px>[r]^-{\eql_{B}} \ar@<-2px>[r]_-{\eqr_{B}} & TB \ar[r]^{\beta} & B
}
\]
We have shown that $s_{\tup{A,\alpha}} \colon \tup{A,\alpha} \to \tup{A/\E,\quotE{\alpha}}$
is an $\Alg{\T,\E}$-reflection arrow for $\tup{A,\alpha}$.
By defining $H\colon \Alg{\T} \to \Alg{\T,\E}$ as
$H\tup{A,\alpha} = \tup{A/\E,\quotE{\alpha}}$, then $H$ is left adjoint to $V$,
and the unit of the adjunction is $\ol{\eta}=q$.
Now, since the unit and counit must satisfy
$V(\epsilon_{\bb{A}}) \circ s_{V\bb{A}} = id_{V\bb{A}}$, for all $\bb{A} \in \Alg{\T,\E}$,
it follows from $s_{V\bb{A}} = \id_\bb{A}$  and $VH=\Id_{\Alg{\T,\E}}$ that
$V(\epsilon_{\bb{A}}) = V(id_{\bb{A}})$, and hence
$\epsilon_{\bb{A}} = id_{\bb{A}}$.
\end{proof}

By composition of adjoints, the functor $UV \colon \Alg{\T,\E}\to\Alg{\T}\to\C$ has a left adjoint given by $X \mapsto \tup{TX/\E,\quotE{(\mu_X)}}$. In what follows, we will write $T'X$ for $TX/\E$.
This allows the following definition.

\begin{definition}[Quotient monad]{\label{defi:quot-monad}}
Given a monad $\T = \tup{T,\eta,\mu}$ on $\C$ and $\T$-equations $\E$,
we define the \emph{quotient monad} $\T' = \tup{T',\eta',\mu'}$ as the monad on $\C$
arising from the composition of the adjunction $\tup{H,V,\ol{\eta}=s,\ol{\epsilon}=\id}$ of Lemma~\ref{lm:adj}
and the Eilenberg-Moore adjunction $\tup{G,U,\eta,\epsilon}$ of $\T$:\\[.6em]
\[\xymatrix{
\Alg{\T,\E} \ar@/^1.5pc/[r]^V \ar@{}[r]|{\top}
& \Alg{\T} \ar@/^1.5pc/[r]^U \ar@/^1.5pc/[l]^H \ar@{}[r]|{\top}
& \C \ar@/^1.5pc/[l]^G \ar@(ur,dr)^-{\T'}
}\]
\end{definition}

We define $q\colon T \To T'$ as the family of underlying $\C$-arrows of reflection arrows for free algebras, i.e.,
\begin{equation}\label{eq:def-q}
    q_X = U s_{\tup{TX,\mu_X}} \colon TX \to T'X
\end{equation}
Naturality of $q$ is clear, since $s$ is natural.
Next, we show that $q$ is a monad morphism from $\T$ to $\T'$. One way of doing so is to show that $q$ is a
coequaliser in the category of monads and monad morphisms. Kelly studied colimits in categories of monads,
and proved their existence in the context of some adjunction~\cite[Proposition 26.4]{kelly:verylong}; with a bit of effort one can instantiate
this to the adjunction constructed above.
For a self-contained presentation in this section, we do not invoke Kelly's results but instead prove directly the
part that shows the existence of a monad morphism. This is instantiated below to the adjunction of
the quotient monad.

\begin{lemma}\label{lm:mm-general}
  Let $\A$ be any subcategory of $\Alg{\T}$, and suppose the forgetful functor $U : \A \rightarrow \C$ has a left adjoint
  $F$, with unit and counit denoted by $\eta'$ and $\epsilon'$ respectively. Then
  \begin{enumerate}
    \item $F$ induces a natural transformation $\kappa : TUF \Rightarrow UF$ so that
    $\kappa \circ T\eta' : T \Rightarrow UF$ is a monad morphism.
    \item Precomposing the functor $\Alg{UF} \rightarrow \Alg{\T}$ induced by this monad morphism with the comparison functor
    $\A \rightarrow \Alg{UF}$ yields the inclusion $\A \rightarrow \Alg{\T}$.
  \end{enumerate}
\end{lemma}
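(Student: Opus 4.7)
The plan is to define $\kappa : TUF \To UF$ by setting $\kappa_X$ to be the $\T$-algebra structure map on $FX$: since $FX$ lies in $\A \subseteq \Alg{\T}$, it carries some structure $\alpha_{FX} : TUFX \to UFX$, and I take $\kappa_X := \alpha_{FX}$. Naturality in $X$ is immediate from the fact that every morphism in $\A$ is in particular a $\T$-algebra homomorphism, so $UFf \circ \kappa_X = \kappa_Y \circ TUFf$ for every $f : X \to Y$ in $\C$.

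Next I would verify that $\sigma := \kappa \circ T\eta' : T \To UF$ is a monad morphism. For the unit law $\sigma \circ \eta = \eta'$, naturality of $\eta$ rewrites $T\eta'_X \circ \eta_X$ as $\eta_{UFX} \circ \eta'_X$, after which the unit law of the $\T$-algebra $FX$ (i.e.\ $\kappa_X \circ \eta_{UFX} = \id_{UFX}$) collapses the composite to $\eta'_X$.

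The main obstacle is the multiplication law $\sigma \circ \mu = U\epsilon'_F \circ \sigma\sigma$. My strategy is to show both sides equal $\kappa \circ T\sigma$. The left-hand side reduces to $\kappa_X \circ T\sigma_X$ using naturality of $\mu$ at $\eta'_X$ together with associativity of the $\T$-algebra $FX$, namely $\alpha_{FX} \circ \mu_{UFX} = \alpha_{FX} \circ T\alpha_{FX}$. For the right-hand side I expand $\sigma\sigma_X = \sigma_{UFX} \circ T\sigma_X = \kappa_{UFX} \circ T\eta'_{UFX} \circ T\sigma_X$ and invoke the crucial observation that $\epsilon'_{FX} : FUFX \to FX$ is an arrow of $\A$, hence a $\T$-algebra homomorphism, so $U\epsilon'_{FX} \circ \kappa_{UFX} = \kappa_X \circ TU\epsilon'_{FX}$. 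Combining this with the triangle identity $U\epsilon'_{FX} \circ \eta'_{UFX} = \id_{UFX}$ under $T(-)$ yields $\kappa_X \circ T\sigma_X$ as required.

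For the second statement, I would trace an object $(A, \alpha_A) \in \A$ through the two functors. The comparison functor produces the $UF$-algebra $(UA, U\epsilon'_A)$, which the functor $\Alg{UF} \to \Alg{\T}$ induced by $\sigma$ turns into the $\T$-algebra $(UA, U\epsilon'_A \circ \sigma_{UA})$. The same two ingredients used above ($\epsilon'_A$ is a $\T$-algebra morphism, so $U\epsilon'_A \circ \kappa_{UA} = \alpha_A \circ TU\epsilon'_A$, plus the triangle identity $U\epsilon'_A \circ \eta'_{UA} = \id_{UA}$) show $U\epsilon'_A \circ \sigma_{UA} = \alpha_A$. On arrows, both functors act as $U$, so the composite coincides with the inclusion $\A \hookrightarrow \Alg{\T}$ on the nose.
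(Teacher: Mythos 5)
Your proposal is correct and follows essentially the same route as the paper's proof: $\kappa_X$ is taken to be the $\T$-algebra structure on $UFX$, the unit law comes from naturality of $\eta$ plus the algebra unit law, the multiplication law is reduced to $\kappa\circ T\sigma$ on both sides using associativity of the algebra on one side and the triangle identity together with the fact that $\epsilon'_{F}$ is a $\T$-algebra homomorphism on the other, and part (2) reuses exactly these two ingredients. The only difference is that you argue equationally where the paper chases a diagram; nothing substantive changes.
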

\begin{proof}
  The functor $F$ sends any $\C$-object $X$ to a $T$-algebra structure on $UFX$; we define $\kappa_X$ to be that algebra structure.
  Naturality of $\kappa$ is immediate since $Ff$ is a $\T$-algebra homomorphism for any $\C$-arrow $f$.
  To see that $\kappa \circ T\eta'$ is a monad morphism, consider:
  $$
  \xymatrix{
    TT \ar[r]^{TT\eta'} \ar[d]^{\mu}
      & TTUF \ar[r]^{T\kappa} \ar[d]^{\mu_{UF}}
      & TUF \ar[r]^{T\eta'_{UF}} \ar[d]^\kappa
      & TUFUF \ar[d]^{\kappa_{UF}}\\
    T \ar[r]^{T\eta'}
      & TUF \ar[r]^{\kappa}
      & UF
      & UFUF \ar[l]_{U\epsilon'_F} \\
    \Id \ar[u]^{\eta} \ar[r]_{\eta'}
      & UF \ar[u]^{\eta_{UF}} \ar@{=}[ur]
      & &
  }
  $$
  The top left square commutes by naturality and the middle square since
  any component of $\kappa$ is an $\T$-algebra. For the right square we have
  $$
    \kappa = \kappa \circ TU\epsilon'_F \circ T\eta'_{UF}
    = U\epsilon'_F \circ \kappa_{UF} \circ T\eta'_{UF}
  $$
  where the first equality follows from the triangle identity $\id_{UF}= U\epsilon'_F \circ \eta'_{UF}$ (and functoriality), and the second from the fact that $\epsilon'_{FX}$ is a $\T$-algebra homomorphism from $\kappa_{UFX}$ to $\kappa_{X}$.
  The bottom left square commutes by naturality,
  and the triangle since $\kappa$ is an $\T$-algebra.


  For (2), we first note that the composite functor under consideration maps any $T$-algebra $\tup{A,\alpha}$ in $\A$
  to $U\epsilon'_{U\tup{A,\alpha}} \circ \kappa_A \circ T \eta'_A$. But we have
  $$
    \alpha = \alpha \circ TU\epsilon'_{\tup{A,\alpha}} \circ T\eta'_A
    = U\epsilon'_{U\tup{A,\alpha}} \circ \kappa_A \circ T\eta'_A
  $$
  where the first equality is a triangle identity and the second is the fact that $\epsilon'_{\tup{A,\alpha}}$ is an algebra morphism.
\end{proof}

\begin{remark}
Lemma~\ref{lm:mm-general} is a special case of what is known as the structure-semantics adjointness, which establishes an adjunction between (algebraic theories or) monads  over $\C$ (the structure) and `forgetful' functors $\A\to \C$ (the semantics), see \cite{lawvere:phd}, \cite{linton:functorial-semantics}, \cite{street:ftm} and, in particular, \cite[Theorem II.1.1]{dubuc:kan}.  
\end{remark}

\begin{corollary}
  $q \colon T \Rightarrow T'$ is a monad morphism.
\end{corollary}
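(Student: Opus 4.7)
The plan is to instantiate Lemma~\ref{lm:mm-general} with $\A = \Alg{\T,\E}$. The forgetful functor $UV\colon \Alg{\T,\E} \to \C$ has a left adjoint $HG$, obtained by composing the reflector $H$ of Lemma~\ref{lm:adj} with the Eilenberg--Moore left adjoint $G$, and by Definition~\ref{defi:quot-monad} the underlying functor $UVHG$ is exactly $T'$. The lemma then guarantees that $\kappa \circ T\eta'\colon T \Ra T'$ is a monad morphism, where $\eta'$ is the unit of $HG \dashv UV$ (which coincides with the unit of $\T'$) and $\kappa_X$ is the $\T$-algebra structure on $UVHGX$. It remains only to identify this morphism with $q$.

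For this, I would unfold both ingredients. By the standard formula for units of composite adjunctions together with $\ol\eta = s$ from Lemma~\ref{lm:adj},
\[
\eta'_X \;=\; U(\ol\eta_{GX}) \circ \eta_X \;=\; U s_{\tup{TX,\mu_X}} \circ \eta_X \;=\; q_X \circ \eta_X.
\]
The component $\kappa_X$ is the $\T$-algebra carried by $HGX = H\tup{TX,\mu_X} = \tup{T'X,\quotE{(\mu_X)}}$, so $\kappa_X = \quotE{(\mu_X)}$. A short calculation then closes the argument:
\[
\kappa_X \circ T\eta'_X \;=\; \quotE{(\mu_X)} \circ T q_X \circ T\eta_X \;=\; q_X \circ \mu_X \circ T\eta_X \;=\; q_X,
\]
using in the middle step that $s_{\tup{TX,\mu_X}}$, and hence $q_X$, is a $\T$-algebra homomorphism from $\tup{TX,\mu_X}$ to $\tup{T'X,\quotE{(\mu_X)}}$, and in the final step the monad unit law $\mu \circ T\eta = \id$.

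I do not anticipate a genuine obstacle here: Lemma~\ref{lm:mm-general} provides the non-trivial content, and this corollary is essentially a bookkeeping step identifying the lemma's output with $q$. The only mild subtlety is tracking the direction of the units when the two adjunctions of Definition~\ref{defi:quot-monad} are composed, so that one correctly recognises $U(\ol\eta_{GX}) \circ \eta_X$ as $q_X \circ \eta_X$.
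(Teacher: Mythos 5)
Your proposal is correct and follows essentially the same route as the paper: both reduce the claim via Lemma~\ref{lm:mm-general} to the identity $q = \kappa \circ T\eta'$, and both verify it using $\eta' = q \circ \eta$ together with the fact that $q_X$ is a $\T$-algebra homomorphism from $\tup{TX,\mu_X}$ to the quotient algebra, followed by the unit law $\mu \circ T\eta = \id$. Your write-up merely makes explicit the unit-composition and the identification $\kappa_X = \quotE{(\mu_X)}$ that the paper leaves implicit in its diagram.
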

\begin{proof}
  By Lemma~\ref{lm:mm-general}, we only need to show that $q$ coincides with $\kappa \circ T\eta'$, where $\eta'$ is
  the unit of the quotient monad. To this end consider the following diagram:
  $$
  \xymatrix{
    T \ar[r]^{T\eta'} \ar[dr]_{T\eta}
      & TUF \ar[r]^\kappa
      & UF \\
    &
    TT \ar[r]^{\mu} \ar[u]_{Tq}
      & T \ar[u]_q
  }
  $$
  Commutativity of the triangle follows from the definition of the quotient monad. For the square, notice that
  the components of $\kappa$ are simply the quotient algebras as constructed in the proof of Lemma~\ref{lm:adj},
  and $q$ is an algebra morphism by construction.
\end{proof}

\begin{remark}
As always, the monad morphism $q:T\To T'$ induces a functor
$$
  \Alg{\T'}\to\Alg{\T}.
$$
By Lemma~\ref{lm:mm-general} (2), the comparison $\Alg{\T,\E}\to\Alg{\T'}$ followed
by $\Alg{\T'}\to\Alg{\T}$ coincides with the inclusion $\Alg{\T,\E}\to\Alg{\T}$.
\end{remark}


The above construction yields a monad $\T'$ given a set of operations and equations. Intuitively,
any monad which is isomorphic to $\T'$ is presented by these same operations and equations; this is captured
by the following definition.
\begin{definition}\label{def:monad-pres}
  Let $\Sigma$ be an endofunctor on $\C$, $\T_\Sig$ the free monad over $\Sig$, and $\T'$ the quotient monad of $\T_\Sig$ with respect to some  $\T_\Sig$-equations $\E$. A monad $\K=\tup{K,\theta,\nu}$
  is \emph{presented by $\Sigma$ and $\E$} if there is a monad isomorphism $i \colon T' \Rightarrow K$.
\defiEnd
\end{definition}
\begin{example}\label{ex:ism}
  The \emph{idempotent semiring monad} is defined by the functor mapping a set $X$ to
  the set $\Pow_\omega(X^*)$ of finite languages over $X$ and, for morphisms $f \colon X \to Y$ in $\Set$
  we define $\Pow_\omega(f^*)(L) = \bigcup \{ f(x_1) \cdots f(x_n) \mid x_1 \cdots x_n \in L \}$. Furthermore,
  $\eta_X \colon X \to \Pow_\omega(X^*)$ and
  $\mu_X \colon \Pow_\omega(\Pow_\omega(X^*)^*) \to \Pow_\omega(X^*)$ are given by
\[\begin{array}{rcl}
 \eta_X(x) &=& \{ x \}, \\[.3em]
 \mu_X(\mathcal{L}) &=& \bigcup_{L_1\cdots L_n \in \mathcal{L}} \{ w_1\cdots w_n \mid w_i \in L_i \}.
\end{array}
\]
  The idempotent semiring monad is presented by two constants $\cns{0}$ and $\cns{1}$, two binary operations $+$ and $\cdot$, and
  the idempotent semiring axioms. The witnessing isomorphism can easily be given based on the observation
  that every semiring term is equivalent with respect to the idempotent semiring equations to a sum of products of variables.
\defiEnd
\end{example}

Finally, we discuss conditions under which $\Alg{\T,\E}$ is isomorphic to $\Alg{\T'}$. In general this need not be the case due to the fact that despite both $\Alg{\T,\E}\to\Alg{\T}$ and $\Alg{\T}\to\C$ being monadic, their composition need not be monadic. This situation occurs in Example~\ref{exle:assumptions}(\ref{exle:assumptions:3}), where $\Alg{\T,\E}$ is torsion-free abelian groups but $\Alg{\T'}=\Alg{\T}$ is abelian groups \cite{Borceux:HB-vol2}.

A general remark in this situation is that, due to Beck's theorem, $\Alg{\T,\E}\to\C$ is monadic if $\Alg{\T,\E}\to\Alg{\T}$ is closed under regular epis, that is, if $A\in \Alg{\T,\E}$ implies $B\in \Alg{\T,\E}$ for regular epis $f:A\to B$ in $\Alg{\T}$. But we can do better in a situation of special interest.

We say that $\C$ is a finitary variety if $\C$ is the category of algebras for a finitary signature and equations, or, equivalently, if $\C$ is the category of algebras for a finitary monad on $\Set$. In particular, we have an adjoint situation $F^\C\dashv U^C:\C\to\Set$. A signature is a functor $\bb{N}\to\Set$, assigning to each `arity' $n$ a set of operation symbols. An endofunctor on $\C$ is said to be generated by a signature $S:\bb{N}\to\Set$ if it is the left-Kan extension of $F^\C S$ along $F^C$, that is, if it is of the form $A\mapsto \coprod_{n\in\bb N}  \C(F^\C n,A)\bullet F^\C Sn$. (Note that we cannot use such endofunctors to imitate Example~\ref{exle:assumptions}(\ref{exle:assumptions:3}) which relies on  the $\bb Z_n$ not being free algebras of the form $F^\C n$.)

\begin{proposition}
If in the situation described in Definition~\ref{def:monad-pres} the category $\C$ is a finitary variety, and $\Sigma$ and $E$ are endofunctors on $\C$ generated by signatures, then $\Alg{\T,\E}$ is isomorphic to $\Alg{\T'}$.
\end{proposition}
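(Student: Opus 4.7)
The strategy is to verify the criterion indicated in the paragraph immediately preceding the proposition: if $\Alg{\T,\E}$ is closed under regular epis in $\Alg{\T}$, then Beck's monadicity theorem implies that the composite forgetful functor $UV\colon\Alg{\T,\E}\to\C$ is monadic. Since by Definition~\ref{defi:quot-monad} the monad induced by this adjunction is precisely $\T'$, the comparison functor $\Alg{\T,\E}\to\Alg{\T'}$ is then an isomorphism of categories.

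The closure condition amounts to the following claim. Let $f\colon\tup{A,\alpha}\to\tup{B,\beta}$ be a regular epi in $\Alg{\T}$ with $\tup{A,\alpha}\in\Alg{\T,\E}$; I want to show $\tup{B,\beta}\in\Alg{\T,\E}$. Using naturality of $l$ and $r$ together with the $\T$-algebra morphism law for $f$,
\[
  \beta\circ l_B\circ Ef \;=\; \beta\circ Tf\circ l_A \;=\; f\circ\alpha\circ l_A \;=\; f\circ\alpha\circ r_A \;=\; \beta\circ Tf\circ r_A \;=\; \beta\circ r_B\circ Ef,
\]
so as soon as $Ef$ is epi in $\C$, cancellation gives $\beta\circ l_B=\beta\circ r_B$ as required.

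It thus remains to check that $Ef$ is epi, and this is where the two signature hypotheses are used. First, because $\Sigma$ is generated by a signature and $\C$ is a finitary variety, $\T=\T_\Sigma$ is a finitary monad on $\C$, and hence $\Alg{\T}$ is itself a finitary variety over $\Set$; consequently regular epis in $\Alg{\T}$ are precisely surjective homomorphisms, and are in particular sent to regular epis in $\C$ by the forgetful functor. Second, because $E$ is generated by a signature $S$, we have $EX=\coprod_n \C(F^\C n,X)\bullet F^\C Sn$ in which each $F^\C n$ is a free algebra in the variety $\C$ and therefore regularly projective, so each representable $\C(F^\C n,-)$ preserves regular epis. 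Since regular epis are stable under copowers and coproducts, $E$ preserves regular epis; in particular $Ef$ is a regular epi, hence epi.

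The main obstacle is the careful lining up of the standard variety-theoretic facts — regular projectivity of free algebras in a finitary variety, coincidence of regular epis with surjective homomorphisms, stability of regular epis under coproducts and copowers — in order to conclude preservation of epis by $E$. Once these are in hand, the diagram chase above is immediate, and Beck's theorem (as recalled in the remark preceding the proposition) closes the argument.
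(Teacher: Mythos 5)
Your argument is correct, but it takes a genuinely different route from the paper: the paper disposes of this proposition in one line by observing that it is an instance of Theorem~4.4 of Velebil--Kurz \cite{VelebilK11}, whereas you give a self-contained verification of the Beck-theorem criterion that the paper only states as a ``general remark'' in the preceding paragraph. Your two key steps both check out: the cancellation argument showing that satisfaction of $\E$ is inherited along regular epis once $Ef$ is epic, and the verification that $Ef$ is indeed epic because signature-generated functors are built from the representables $\C(F^\C n,-)$ at regularly projective free algebras, combined with copowers and coproducts (all of which preserve regular epis). A nice feature of your proof is that it makes visible exactly where the signature hypothesis is used --- regular projectivity of the $F^\C n$ --- which is precisely what fails in the paper's counterexample \ref{exle:assumptions}(\ref{exle:assumptions:3}), where the $\bbZ_n$ are not free. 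What the citation buys the paper is brevity and a result stated at the natural level of generality for equational presentations over a base variety; what your proof buys is transparency and independence from an external reference. Two small points you should tighten: (i) Beck's theorem in its crude form only yields an equivalence, so to obtain the stated \emph{isomorphism} you should note that $UV$ \emph{creates} coequalizers of $UV$-split pairs (this follows from your closure argument together with the fact that $\Alg{\T,\E}$ is a full replete subcategory, via the precise monadicity theorem); and (ii) the assertion that $\Alg{\T}$ is a finitary variety over $\Set$ deserves a sentence --- e.g., observe that $\Alg{\T_\Sigma}\cong\Sigma\text{-}\mathsf{Alg}$ and that a $\Sigma$-algebra structure on a $\C$-object $A$ amounts, by the coproduct-of-free-algebras form of $\Sigma A$, to an unconstrained family of finitary operations on the underlying set, so that $\Alg{\T}$ is the variety for the signature of $\C$ extended by $S$ modulo the equations of $\C$. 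Neither point is a gap in the substance of the argument.
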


\begin{proof}
This is an instance of Theorem~4.4 of \cite{VelebilK11}.
\end{proof}


\section{Distributive Laws and Bialgebras}
\label{sec:dls}

We briefly recall the basic definitions of distributive laws and bialgebras;
for a more thorough introduction we refer to \cite{Klin11,Bartels:PhD,TuriPlotkin:LICS-GSOS}.

\subsection{Basic Definitions}
\label{ssec:dls-basic}

Let $\T = \tup{T,\eta,\mu}$ be a monad on a category $\C$,
and $F$ an endofunctor on $\C$.
A \emph{distributive law} $\lambda$ of the monad $\T$ over the functor $F$
is a natural transformation $\lambda \colon TF\To FT$
which is compatible with the monad structure,
meaning that $\lambda\circ\eta_F = F\eta$ and
$\lambda\circ \mu_F = F\mu\circ\lambda_T\circ T\lambda$,
i.e., for all $X$ the following diagrams commute:
\[
\xymatrix@R=1.5em@C=1.5em{
FX \ar[rr]^-{\eta_{FX}} \ar[ddrr]_-{F\eta_{X}}
&& TFX \ar[dd]^-{\lambda_X}
\\
&  \ar@{}[r]^(.4){(\text{unit.})\lambda} & &&&
\\
&& FTX &&
}
\xymatrix@C=4em@R=3em{
T^2FX \ar[d]_-{\mu_{FX}} \ar[r]^-{T\lambda_X} \ar@{}[drr]|{(\text{mult.})\lambda}
& TFTX \ar[r]^-{\lambda_{TX}}
& FT^2 X \ar[d]^-{F\mu_X}
\\
TFX \ar[rr]^-{\lambda_X} && FTX
}
\]
We recall that every distributive law $\lambda\colon TF \To FT$
corresponds to a \emph{lifting} $F_\lambda$ of $F$ to the category of
$\T$-algebras (see, e.g.,~\cite{Johnstone:Adj-lif,Klin11}), defined as
\begin{equation}\label{eq:F-lambda}
  F_\lambda \tup{A, \alpha} = \tup{FA, F \alpha \circ \lambda_A} \qquad \qquad F_\lambda(f) = Ff
\end{equation}
Note that the compatibility of $\lambda_X$ with $\mu_X$ means precisely that $\lambda_X$ is a $\T$-algebra homomorphism from $\tup{TFX, \mu_{FX}}$ to $F_\lambda\tup{TX, \mu_X}$.

An \emph{$F$-coalgebra} is a pair $\tup{X,c}$ where $X$ is a $\C$-object and
$c \colon X \to FX$ is a $\C$-arrow. An \emph{$F$-coalgebra morphism}
from $\tup{X,c}$ to $\tup{Y,d}$ is an arrow $f\colon X \to Y$
such that $d\circ f = Tf \circ c$.
%
Given a distributive law $\lambda$ of $\T$ over $F$,
a \emph{$\lambda$-bialgebra} $\tup{X,\alpha,\beta}$ consists of a carrier $X$,
a $\T$-algebra $\alpha\colon TX\to X$ and an $F$-coalgebra $\beta\colon X \to FX$ such that
$\beta\circ\alpha = F\alpha\circ\lambda_X\circ T\beta$.
A~\emph{morphism of $\lambda$-bialgebras} from $\tup{X_1,\alpha_1,\beta_1}$
to $\tup{X_2,\alpha_2,\beta_2}$ is an arrow $f\colon X_1 \to X_2$
which is both a $\T$-algebra homomorphism and an $F$-coalgebra morphism.

The following results are well known (see, e.g.,~\cite{TuriPlotkin:LICS-GSOS,Bartels:PhD,Klin11}).
If $\tup{Z,\zeta}$ is a final $F$-coalgebra, then
a distributive law $\lambda\colon TF \To FT$ yields a
final $\lambda$-bialgebra $\tup{Z,\alpha,\zeta}$
where  $\alpha\colon TZ \to Z$ is defined by coinduction
from the $F$-coalgebra $\tup{TZ,\lambda_Z\circ T\zeta}$.

We will need the notion of distributive laws of monads over \emph{copointed} functors.
A~copointed functor is a pair $\tup{F, \epsilon}$ where $F$ is an endofunctor and $\epsilon \colon F \Rightarrow \Id$
a natural transformation. A distributive law of $\T$ over $\tup{F, \epsilon}$
is a distributive law of $\T$ over $F$ additionally satisfying $\epsilon_T \circ \lambda = T\epsilon$.
For any endofunctor $F$ on a category $\C$ with products, the \emph{cofree copointed functor generated by $F$}
is the pair $\tup{\Id\times F,\pi_1\colon \Id\times F \to \Id}$
where $\pi_1$ is the natural left-projection.

When $\T = \T_\Sig$ is the free monad generated by a (signature) functor $\Sig$,
then distributive laws involving $\T$ can be reduced to ``plain''
natural transformations using recursion, namely,
there is a 1-1 correspondence between distributive laws
$\lambda\colon T_\Sig F \To FT_\Sig$ of $\T_\Sig$ over $F$ and natural transformations
$\rho\colon \Sig F \To FT_\Sig$ (cf.~\cite{TuriPlotkin:LICS-GSOS,Bartels:PhD}).
Such a $\rho$ corresponds to a specification format of operational rules,
and is sometimes referred to as a \emph{simple SOS specification}.
Similarly, for cofree copointed functors,
if $\T_\Sig$ is freely generated by $\Sig$,
then there is a 1-1 correspondence between distributive laws
$\lambda\colon T_\Sig(\Id\times F) \To (\Id\times F)T_\Sig$ of
$\T_\Sig$ over $\tup{\Id\times F,\pi_1}$ and natural transformations
$\rho\colon \Sig(\Id\times F) \To FT_\Sig$ (cf.~\cite{LPW2004:cat-sos,Jacobs:bialg-dfa-regex}).
Such a natural transformation $\rho$ is also referred to as an
\emph{abstract GSOS specification} since it generalises the GSOS-format
for labelled transition systems where $F=(\Pow_\omega(-))^A$,
cf.~\cite{Bartels:PhD,TuriPlotkin:LICS-GSOS}.
In what follows, we will generally omit $\Sigma$-subscripts on free monads in order to keep notation uncluttered.


\subsection{Solutions to Corecursive Equations}
\label{ssec:corecursive-eqs}

An important application of distributive laws is in
solving \emph{corecursive equations} which are arrows of the type
$\phi\colon X \to FTX$ where $F$ is a functor and $T$ is (the functor component of) a monad.
These include many interesting and useful structures such as
linear and context-free systems of behavioural differential equations~\cite{Rut03:TCS-bde,WBR:CF-CALCO}, as well as
linear, nondeterministic and weighted automata
cf.~\cite{Jacobs:bialg-dfa-regex,SilvaBBR10}.
These are all instances of \emph{$\T$-automata}~\cite{Jacobs:bialg-dfa-regex}
(where $\T$ is a monad on $\Set$)
which have the type
$X \to B \times (TX)^A$ where
$A$ is a set and $B$ carries a $\T$-algebra $\beta\colon TB \to B$,
i.e., in particular, $F = B \times (-)^A$ whose final coalgebra
carrier is $B^{A^*}$.

In the presence of a distributive law $\lambda\colon TF \To FT$
one obtains a \emph{$\lambda$-coinduction principle} \cite{Bartels:PhD}
which provides unique solutions in the final $\lambda$-bialgebra
$\tup{Z,\alpha,\zeta}$
to corecursive equations of the form
$\phi\colon X \to FTX$.
Ordinary coinduction is the special case where $\T$ is the identity monad.
Formally, a solution to $\phi\colon X \to FTX$
in a $\lambda$-bialgebra $\tup{A,\alpha,\beta}$
is an arrow $f\colon X \to A$ such that
\begin{equation}\label{eq:solution}
\vcenter{
\xymatrix{
X\ar[d]_-{\phi} \ar[rr]^{f} && A\ar[d]^{\beta} \\
FTX \ar[r]^-{FTf} & FTA \ar[r]^{F\alpha} & FA
}
}
\end{equation}
commutes.
More precisely, $\lambda$-coinduction is coinduction in the category of
$\lambda$-bialgebras, and we have the following fact.

\begin{proposition}[Lemmas~4.3.3~and~4.3.4 of \cite{Bartels:PhD}]
\label{prop:solution}
Let $\phi\colon X \to FTX$ be a corecursive equation.
Taking
$\ext{\phi}{\lambda} = F\mu_X \circ \lambda_{TX} \circ T\phi$
then $\tup{TX,\mu_X,\ext{\phi}{\lambda}}$ is a $\lambda$-bialgebra,
and $\eta_X \colon X \to TX$ is a solution of $\phi$.
Moreover,
for any $\lambda$-bialgebra $\tup{A,\alpha,\beta}$,
there is a 1-1 correspondence
between solutions of $\phi$ in $\tup{A,\alpha,\beta}$
and $\lambda$-bialgebra morphisms from
$\tup{TX,\mu_X,\ext{\phi}{\lambda}}$ to $\tup{A,\alpha,\beta}$.
\end{proposition}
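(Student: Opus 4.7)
My plan is to verify in sequence the three claims: (i) that $\tup{TX,\mu_X,\ext{\phi}{\lambda}}$ is a $\lambda$-bialgebra, (ii) that $\eta_X$ is a solution of $\phi$ in it, and (iii) the claimed bijection between solutions in $\tup{A,\alpha,\beta}$ and $\lambda$-bialgebra morphisms out of $\tup{TX,\mu_X,\ext{\phi}{\lambda}}$. Each step is a diagram chase; the difficulty is concentrated in one direction of (iii).

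For (i), I unfold the definition of $\ext{\phi}{\lambda}$ on both sides of the bialgebra compatibility
$$\ext{\phi}{\lambda}\circ\mu_X \;=\; F\mu_X\circ\lambda_{TX}\circ T\ext{\phi}{\lambda}$$
and rewrite using, in order, naturality of $\mu$ applied to $\phi$ (turning $T\phi\circ\mu_X$ into $\mu_{FTX}\circ T^2\phi$), the multiplication axiom of $\lambda$ (replacing $\lambda_{TX}\circ\mu_{FTX}$ by $F\mu_{TX}\circ\lambda_{T^2X}\circ T\lambda_{TX}$), associativity of $\mu$, and naturality of $\lambda$ at $\mu_X$. Each step is a pure substitution.

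For (ii), I compose $\ext{\phi}{\lambda}$ with $\eta_X$, pull $\eta$ through $\phi$ by naturality, then apply the unit axiom of $\lambda$ to rewrite $\lambda_{TX}\circ\eta_{FTX}$ as $F\eta_{TX}$; the monad unit law $\mu_X\circ\eta_{TX}=\id$ then collapses the whole composite to $\phi$. The right-hand side $F\mu_X\circ FT\eta_X\circ\phi$ of the solution diagram similarly collapses to $\phi$ via the dual unit law $\mu_X\circ T\eta_X=\id$.

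For (iii), I define the bijection in both directions. A bialgebra morphism $g$ is sent to $g\circ\eta_X$, and the solution diagram for this composite is obtained by pasting the coalgebra-morphism square for $g$ on top of the solution diagram for $\eta_X$ just established in (ii). In the other direction, a solution $f\colon X\to A$ is sent to its unique $\T$-algebra extension $f^\sharp=\alpha\circ Tf\colon TX\to A$. The main obstacle is showing that $f^\sharp$ is also an $F$-coalgebra morphism, i.e. $\beta\circ f^\sharp = Ff^\sharp\circ\ext{\phi}{\lambda}$; I expect this to require combining the solution equation for $f$, naturality of $\lambda$ at $f$, the bialgebra compatibility of $\tup{A,\alpha,\beta}$, and the algebra law $\alpha\circ T\alpha = \alpha\circ\mu_A$. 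Once this is in hand, mutual inversion is immediate: $f^\sharp\circ\eta_X=f$ is the defining property of $f^\sharp$, while $(g\circ\eta_X)^\sharp = g$ follows from the uniqueness of algebra extensions out of the free algebra $\tup{TX,\mu_X}$ together with the fact that $g$ is itself a $\T$-algebra homomorphism.
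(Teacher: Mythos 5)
Your proposal is correct and is essentially the standard argument: the paper itself gives no proof of this proposition, importing it verbatim from Lemmas~4.3.3 and~4.3.4 of Bartels' thesis, and your diagram chases (the multiplication/unit axioms of $\lambda$ for parts (i)--(ii), and the extension $f\mapsto f^\sharp=\alpha\circ Tf$ versus restriction $g\mapsto g\circ\eta_X$ for part (iii)) are exactly how that source proves it. The only nitpick is that the hard direction of (iii) also uses naturality of $\mu$ and needs naturality of $\lambda$ at $\alpha$ as well as at $Tf$, but these are routine and do not affect correctness.
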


A ``pointwise distributive law'' $\lambda$ for $\T$-automata
can be obtained (cf.~\cite{Jacobs:bialg-dfa-regex,Jacobs06}) by taking
$\lambda_X = (\beta\times\mathsf{st})\circ \tup{T\pi_1,T\pi_2}$
where $\mathsf{st}\colon T\circ (-)^A \To (-)^A\circ T$
is the strength natural transformation.
This $\lambda$ is called ``pointwise'', since the algebra structure induced on the carrier $B^{A^*}$ of the final $B \times (-)^A$-coalgebra is the pointwise extension of $\beta\colon TB \to B$.
In the context-free and streams examples below, however,
the desired algebraic structure on $B^{A^*}$ uses the convolution product
which is not the pointwise extension of the semiring product of $B$.
So for these examples a different $\lambda$ must be given.


\section{Quotients of Distributive Laws}
\label{sec:quotients-of-dls}

In Section~\ref{sec:quot-monad} we saw how equations give rise to
quotients of algebras, and we gave a construction of the resulting
quotient monad.
In this section, we investigate conditions under which
distributive laws and equations give rise to quotients of distributive laws.

As before, let $\Sig$ be a functor
generating the free monad $\T = \tup{T,\eta,\mu}$,
and let $\E = \tup{E,\eql,\eqr}$ be $\T$-equations with the associated
quotient monad $\T' = \tup{T',\eta',\mu'}$.

\subsection{Distributive Laws over Plain Behaviour Functors}
\label{ssec:dls-plain-F}

In this subsection, we assume that
$\lambda \colon TF \To FT$ is a distributive law
of a monad $\T$ over a plain behaviour functor $F$.
We will provide a condition on $\lambda$ and the
equations $\E$ that ensures that we get a distributive law
$\lambda' \colon T'F \To FT'$ for the quotient monad.
To this end, it is convenient to use the notion of a morphism
of distributive laws from~\cite{PW02:monad-comonad,Watanabe:cmcs2002}.

\begin{definition}\label{defi:morphism-of-dl}
Let $\tup{T,\eta,\mu}$ and $\tup{K,\theta,\nu}$ be monads,
and let $\lambda\colon TF \To FT$ and $\kappa\colon KF \To FK$
be distributive laws.
A natural transformation $\tau\colon T \To K$ is a \emph{morphism of distributive laws} from $\lambda$ to $\kappa$
(notation $\tau\colon \lambda \To \kappa$)
if $\tau$ is a monad morphism and the following square commutes:
\begin{equation}\label{eq:morphism-dl-square}
\xymatrix{
TF \ar@{=>}[d]_-{\lambda}  \ar@{=>}[r]^-{\tau{F}}
& KF \ar@{=>}[d]^-{\kappa}
\\
FT \ar@{=>}[r]^-{F\tau}
& FK
}
\end{equation}
\defiEnd
\end{definition}

We note that there are generalisations of the above definition
that allow natural transformations between behaviour functors,
cf.~\cite{Watanabe:cmcs2002}. For our purposes, we do not need
to change the behaviour type.

\begin{definition}\label{defi:lambda'}
We say that \emph{$\lambda \colon TF \To FT$  preserves (equations in) $\E$} if
for all $X$ in $\C$:
\begin{equation}\label{eq:preservation-E}
\xymatrix{
  EFX \ar@<2px>[r]^-{\eql_{FX}} \ar@<-2px>[r]_-{\eqr_{FX}} & TFX \ar[r]^{\lambda_X} & FTX \ar[r]^{Fq_{X}} & FT'X
}
\end{equation}
commutes.
\defiEnd
\end{definition}

In $\Set$, 
preservation of equations can be conveniently formulated in terms of relation lifting.
The $F$-lifting of a relation $R \sse Y \times Y$ is defined as
\[\Rel{F}(R) = \{ \tup{F\pi_1(u),F\pi_2(u)} \in FY \times FY \mid u \in F(R)\} \, .
\]
For any set $X$, we denote by $\Eq{X}$ the congruence $\ker(q_X)$ on $TX$ generated by the equations.
If the lifting $\Rel{F}$ preserves inverse images, then it preserves kernel relations.\footnote{The proof of this for polynomial functors in \cite[Lemma 3.2.5(i)]{BJ:Coalg-book-draft} goes through for arbitrary $F$ under the assumption of preservation of inverse images, since for any $F$, the lifting $\Rel{F}$ preserves diagonals.
In particular, $\Rel{F}$ preserves inverse images if $F$ preserves weak pullbacks (see e.g.,~\cite[Proposition 4.4.3]{BJ:Coalg-book-draft}).}
This means that $\Rel{F}(\Eq{X}) = \ker(Fq_X)$, and 
hence equation~\eqref{eq:preservation-E} is
satisfied if for every set $X$ and every $b \in EFX$:
\begin{equation}\label{eq:preserve-eq-lift}
\lambda_X \circ l_{FX}(b) \;\Rel{F}(\Eq{X})\; \lambda_X \circ r_{FX}(b).
\end{equation}


We now come to our main result. 
On the one hand, item (2) of Theorem~\ref{thm:lambda'-dlaw} below gives us a distributive law for the quotient monad and the useful consequences that follow from it such as, e.g., the solution of recursive equations as discussed in Sections \ref{ssec:corecursive-eqs} and \ref{sec:quotients-of-bialgebras}. On the other hand, condition (1) in the form of \eqref{eq:preservation-E} is amenable to explicit calculations as shown in Examples~\ref{ex:stream-calc}, \ref{exle:not-preserve}, and \ref{ex:CFG-lambda}.

\newcommand{\thmlambdaprime}{
The following are equivalent.
\begin{enumerate}
\item $\lambda\colon TF \To FT$ preserves equations in $\E$.
\item There is a (unique) distributive law $\lambda'\colon T'F \To FT'$
such that $q \colon T \To T'$
is a morphism
of distributive laws from $\lambda$ to $\lambda'$.
\end{enumerate}
}
\begin{theorem}\label{thm:lambda'-dlaw}
\thmlambdaprime
\end{theorem}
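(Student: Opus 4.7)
The direction (2) $\Rightarrow$ (1) is immediate. Given $\lambda'$ with $\lambda'_X \circ q_{FX} = Fq_X \circ \lambda_X$, the claim $Fq_X \circ \lambda_X \circ l_{FX} = Fq_X \circ \lambda_X \circ r_{FX}$ reduces to $q_{FX} \circ l_{FX} = q_{FX} \circ r_{FX}$, which holds because $q_{FX} = Us_{\tup{TFX, \mu_{FX}}}$ identifies the equations of $\E$ on the free $\T$-algebra over $FX$.

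For (1) $\Rightarrow$ (2), the plan is to construct $\lambda'_X$ as the unique factorization of $Fq_X \circ \lambda_X$ through the reflection arrow $q_{FX}$, invoking the universal property of $\Alg{\T,\E} \hookrightarrow \Alg{\T}$ from Lemma~\ref{lm:adj}. The target will be the $\T$-algebra $\tup{FT'X, \gamma_X}$ with $\gamma_X = F\quotE{(\mu_X)} \circ \lambda_{T'X}$, obtained by applying the lifting $F_\lambda$ to $\tup{T'X, \quotE{(\mu_X)}}$. Two properties must be checked. First, $\tup{FT'X, \gamma_X}$ lies in $\Alg{\T, \E}$: using the identity $\quotE{(\mu_X)} = \mu'_X \circ q_{T'X}$ (that the free $\T'$-algebra, restricted to $\T$ via $q$, coincides with the reflection of the free $\T$-algebra, as recorded in the Remark following the Corollary), the equality $\gamma_X \circ l_{FT'X} = \gamma_X \circ r_{FT'X}$ follows from condition~(1) instantiated at the component $T'X$ by postcomposition with $F\mu'_X$. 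Second, $Fq_X \circ \lambda_X$ is a $\T$-algebra morphism $\tup{TFX, \mu_{FX}} \to \tup{FT'X, \gamma_X}$, since $\lambda_X$ is an algebra morphism into $F_\lambda\tup{TX, \mu_X}$ by the multiplication axiom of distributive laws, and $Fq_X = F_\lambda(q_X)$ is one by functoriality of the lifting applied to the $\T$-algebra morphism $q_X \colon \tup{TX, \mu_X} \to \tup{T'X, \quotE{(\mu_X)}}$. Lemma~\ref{lm:adj} then delivers a unique $\T$-algebra morphism $\lambda'_X \colon \tup{T'FX, \quotE{(\mu_{FX})}} \to \tup{FT'X, \gamma_X}$ satisfying $\lambda'_X \circ q_{FX} = Fq_X \circ \lambda_X$, which is precisely the morphism-of-distributive-laws square.

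To finish, verify that $\lambda'$ is a distributive law of $\T'$ over $F$. Naturality and the unit axiom are direct diagram chases, concluded by canceling $q_{FX}$, which is epi in $\C$ by Assumption~\ref{assumps}(2). For the multiplication axiom $\lambda'_X \circ \mu'_{FX} = F\mu'_X \circ \lambda'_{T'X} \circ T'\lambda'_X$, precompose both sides with $q_{T'FX}$ (also epi in $\C$ for the same reason); using $\mu'_{FX} \circ q_{T'FX} = \quotE{(\mu_{FX})}$, the algebra-morphism property of $\lambda'_X$, the defining identity $\lambda'_{T'X} \circ q_{FT'X} = Fq_{T'X} \circ \lambda_{T'X}$, and naturality of $q$, both sides simplify to $F\mu'_X \circ Fq_{T'X} \circ \lambda_{T'X} \circ T\lambda'_X$. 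Uniqueness of $\lambda'$ is immediate from epi-ness of $q_{FX}$.

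The main obstacle is the identification of the target algebra $\tup{FT'X, \gamma_X}$ and verifying that it satisfies~$\E$, since this is the precise point at which condition~(1) enters the argument. Once this is in place, the factorization is forced by the reflection universal property, and the distributive law axioms for $\lambda'$ are secured by diagrammatic cancellations of the epis provided by Assumption~\ref{assumps}.
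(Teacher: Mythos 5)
Your proof is correct, and it rests on the same central observation as the paper's: via the lifting $F_\lambda$, the composite $Fq_X\circ\lambda_X$ is a $\T$-algebra homomorphism out of the free algebra $\tup{TFX,\mu_{FX}}$, so it factors uniquely through $q_{FX}$, and the distributive-law axioms for $\lambda'$ then follow by cancelling the epis supplied by Assumption~\ref{assumps}. You do organise the factorisation differently, though. The paper uses condition (1) at the component $X$ and extends it from generators to all terms (from \eqref{eq:preservation-E} to \eqref{eq:preservation-cong}), so that the coequalizer property of $q_{FX}$ applies directly; you instead use condition (1) at the component $T'X$, postcomposed with $F\mu'_X$, to show that the codomain $F_\lambda\tup{T'X,\quotE{(\mu_X)}}$ lies in $\Alg{\T,\E}$, and then invoke the reflection of Lemma~\ref{lm:adj}. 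The two are interchangeable --- the reflection's universal property is itself derived from the coequalizer in the proof of Lemma~\ref{lm:adj} --- but your version makes explicit that preservation of $\E$ is exactly what is needed for the lifting $F_\lambda$ to restrict to $\Alg{\T,\E}$ on the relevant free algebras, in the spirit of the remark following Theorem~\ref{thm:lambda'-dlaw}. Your verification of the multiplication axiom is also more economical: exploiting that $\lambda'_X$ is by construction an algebra morphism into $F_\lambda\tup{T'X,\quotE{(\mu_X)}}$ lets you cancel the single epi $q_{T'FX}$, whereas the paper's diagram \eqref{eq:lambda'-mult} chases a larger square and cancels the composite epi $T'q_{FX}\circ q_{TFX}$. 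One tiny imprecision: the unit law needs no epi-cancellation at all, since $\eta'_{FX}=q_{FX}\circ\eta_{FX}$ already routes through $q_{FX}$.
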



%


\begin{proof}
We first show that \eqref{eq:preservation-E} extends to the following:
\begin{equation}\label{eq:preservation-cong}
  \xymatrix{
    TEFX \ar@<2px>[r]^{\eql_{FX}^\sharp} \ar@<-2px>[r]_{\eqr_{FX}^\sharp} &  TFX \ar[r]^{\lambda_X} & FTX \ar[r]^{Fq_{X}} & FT'X
  }
\end{equation}
To obtain \eqref{eq:preservation-cong}
it suffices to show that $Fq_{X}\circ\lambda_X$ is a $\T$-algebra homomorphism,
since then $Fq_{X}\circ\lambda_X \circ \eql_{FX}^\sharp$ and $Fq_{X}\circ\lambda_X \circ \eqr_{FX}^\sharp$
are $\T$-algebra homorphisms extending $Fq_{X}\circ\lambda_X \circ \eql_{FX}$ and $Fq_{X}\circ\lambda_X \circ \eqr_{FX}$, respectively.
Since these latter two are equal
due to \eqref{eq:preservation-E}, and homomorphic extensions are unique,
we then get \eqref{eq:preservation-cong}.

We now show that $Fq_{X} \circ \lambda_X$ is a $\T$-algebra homomorphism.
Let $F_\lambda$ be the lifting of $F$ to the category of $\T$-algebras,
and recall that $\lambda_X$ is a $\T$-algebra homomorphism from $\tup{TFX, \mu_{FX}}$ to $F_\lambda\tup{TX, \mu_X}$ (cf.~Section~\ref{ssec:dls-basic}).
Since also  $\tup{TX, \mu_X} \xrightarrow{q_{X}} \tup{T'X, \mu'_X \circ q_{T'X}}$ is a $\T$-algebra homomorphism,
 by applying the lifting $F_\lambda$ we obtain a $\T$-algebra homomorphism
$$
\xymatrix{F_\lambda\tup{TX, \mu_X} \ar[r]^-{Fq_{X}} & F_\lambda \tup{T'X, \mu'_X \circ q_{T'X}}} \, .
$$
Thus $Fq_{X} \circ \lambda_X$ is a $\T$-algebra homomorphism from the free $\T$-algebra $\tup{TFX,\mu_{FX}}$.

This proves that~\eqref{eq:preservation-cong} commutes.
Now, by the universal property of the coequalizer $q_{FX}$ there is a (unique) algebra homomorphism $\lambda'_X \colon T'FX \rightarrow FT'X$
such that $\lambda'_X \circ q_{FX} = Fq_{X} \circ \lambda_X$:
\begin{equation}\label{eq:lambda'-square}
\xymatrix{
TE{FX} \ar@<2px>[r]^{\eql_{FX}^\sharp} \ar@<-2px>[r]_{\eqr_{FX}^\sharp}
 & TFX \ar[d]_-{\lambda_X}  \ar[r]^-{q_{FX}}
 & T'FX \ar@{-->}[d]^-{\lambda'_X}
\\
& FTX \ar[r]^-{Fq_{X}}
& FT'X
}
\end{equation}
The naturality of $\lambda'$ follows from \eqref{eq:lambda'-square}, and
the naturality of $\lambda$ and $q$.
Due to the commutativity of the square in \eqref{eq:lambda'-square}, $q$ is a morphism
of distributive laws from $\lambda$ to $\lambda'$
once we show that $\lambda'$ is, in fact, a distributive law.

The unit law for $\lambda'$ holds
due to the unit law for $\lambda$ and
\eqref{eq:lambda'-square}:
\begin{equation}\label{eq:lambda'-unit}
\xymatrix{
FX \ar[r]^-{\eta_{FX}} \ar[dr]_-{F\eta_{X}}
& TFX \ar[r]^-{q_{FX}} \ar[d]^-{\lambda_X}
& T'FX \ar[d]^-{\lambda'_X} \ar@{}[dl]|{\eqref{eq:lambda'-square}}
\\
& FTX \ar[r]_-{Fq_{X}} & FT'X
}
\end{equation}
Multiplication law for $\lambda'$:
\begin{equation}\label{eq:lambda'-mult}
\xymatrix@R=1.4cm@C=2.3cm{
TFX \ar[rr]^-{\lambda_X}
    \ar@/_5pc/[dddd]_-{q_{FX}}
    \ar@{}[drr]|{(\text{mult.})\lambda}
&& FTX  \ar@/^5pc/[dddd]^-{Fq_{X}}
\\
T^2FX \ar[u]_-{\mu_{FX}} \ar[d]^-{q_{TFX}}
      \ar[r]^-{T\lambda_X}
      \ar@{}[dr]|{(\text{nat.})q}
& TFTX\ar[r]^-{\lambda_{TX}} \ar@{->>}[d]^-{q_{FTX}} \ar@{}[dr]|{\eqref{eq:lambda'-square}_{TX}}
& FT^2X \ar[u]_-{F\mu_X} \ar@{->>}[d]^-{Fq_{TX}}
\\
T'TFX \ar[d]^-{T'q_{FX}}
      \ar[r]^-{T'\lambda_X}
      \ar@{}[dr]|{T'\eqref{eq:lambda'-square}}
& T'FTX \ar[r]^-{\lambda'_{TX}}
      \ar@{->>}[d]^-{T'Fq_{X}}
      \ar@{}[dr]|{\text{(nat.)}\lambda'}
& FT'TX  \ar@{->>}[d]^-{FT'q_{X}}
\\
T'T'FX
      \ar[r]^-{T'\lambda'_X}
      \ar[d]_-{\mu'_{FX}}
& T'FT'X \ar[r]^-{\lambda'_{T'X}} 
& FT'T'X  \ar[d]^-{F\mu'_X} 
\\
T'FX \ar[rr]^-{\lambda'_X}
&& FT'X
}
\end{equation}
The small upper-left square commutes by naturality of $q$.
The small lower-left square
commutes by applying $T'$ to \eqref{eq:lambda'-square}.
The outer crescents commute since $q$ is a monad morphism, and the outermost
part does due to \eqref{eq:lambda'-square}. Finally, use that by naturality of $q$,
$T'q_{FX}\circ q_{TFX} = q_{T'FX}\circ Tq_{FX}$, which by Assumption~\ref{assumps}
is an epi, and hence can be right-cancelled to yield commutativity of the lower rectangle as desired.

The implication from 2 to 1 follows from
the fact that \eqref{eq:lambda'-square}
implies  \eqref{eq:preservation-E}.
\end{proof}

\newcommand{\Ran}{\mathsf{Ran}}
\begin{remark}
Street~\cite{street:ftm} investigates the 2-category where monads are objects and distributive laws $\lambda\colon SF\To FT$, called monad functors, are the 1-cells $(\lambda,F)\colon T\to S$. Given a monad $T$, the right Kan-extension $\Ran_F FT$ of $FT$ along $F$ is again a monad. Morever, distributive laws $SF\To FT$ are in 1-1 correspondence to monad morphisms $S\To \Ran_F FT$, cf.~\cite[Theorem 5]{street:ftm}. 

In this setting, if the free monad $T_E$ over $E$ exists, then the equations can be expressed more abstractly as a parallel pair of monad morphisms
$\xymatrix{T_E
\ar[]!<0ex,0.35ex>;[r]!<0ex,0.35ex>
\ar[]!<0ex,-0.35ex>;[r]!<0ex,-0.35ex>&T}$, 
and the distributive law $\lambda\colon TF\To FT$ satisfies the equations iff 
its transpose $T\To\Ran_F FT$ does, that is, iff 
$\xymatrix{T_E
\ar[]!<0ex,0.35ex>;[r]!<0ex,0.35ex>
\ar[]!<0ex,-0.35ex>;[r]!<0ex,-0.35ex>&T
\ar[r]
&
\Ran_F FT
\ar[r]
&
\Ran_F FT'
}
$ 
commutes.
Since $T'$ is a coequaliser of monads, this induces a monad morphism $T'\To\Ran_F FT'$, which, after transposing, gives a distributive law
$\lambda': T'F\To FT'$. We thank Neil Ghani for this conceptually elegant argument of the equivalence stated in Theorem~\ref{thm:lambda'-dlaw}.
We note that the above elementary proof does not require the existence of the free monad over $E$, and moreover, it avoids introducing more abstract definitions.
\end{remark}

\begin{remark}
Using that distributive laws
correspond to functor liftings
on $\T$-algebras (cf.~\eqref{eq:F-lambda}), 
the distributive law $\lambda'$ in
Theorem~\ref{thm:lambda'-dlaw} exists if and only if the functor
$F_\lambda$ restricts to $\T'$-algebras.
A similar statement for the case when $F$ is a monad is made in
\cite[Corollary~3.4.2]{Manes:MonadComp}.
\end{remark}

As a corollary we obtain the analogue of Theorem~\ref{thm:lambda'-dlaw}
for monads presented by operations and equations.

\newcommand{\corpresdl}{
  Suppose $\K = \tup{K,\theta,\nu}$ is presented by operations $\Sigma$ and equations $\E$
  with natural isomorphism $i\colon T' \To K$,
  and suppose
  we have a distributive law $\lambda \colon TF \Rightarrow FT$ of $\T$ over $F$.
  Then there exists a unique distributive law $\kappa \colon KF \To FK$
  of $\K$ over $F$
  such that $i \circ q \colon \lambda \To \kappa$ is a morphism of distributive laws.
}
\begin{corollary}\label{cor:pres-dl}
  \corpresdl
\end{corollary}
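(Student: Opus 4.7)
The plan is to reduce to Theorem~\ref{thm:lambda'-dlaw} and transport the resulting distributive law along the monad isomorphism $i$. I take as a standing hypothesis (inherited from Theorem~\ref{thm:lambda'-dlaw}, and needed for existence) that $\lambda$ preserves $\E$. Theorem~\ref{thm:lambda'-dlaw} then yields a unique distributive law $\lambda' \colon T'F \To FT'$ with $q \colon \lambda \To \lambda'$ a morphism of distributive laws, so in particular $\lambda' \circ q_F = Fq \circ \lambda$.

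I then define
\[
  \kappa \;:=\; Fi \circ \lambda' \circ i^{-1}_F \;\colon\; KF \To FK.
\]
Since $i$ is an isomorphism of monads, so is $i^{-1}$, so both are monad morphisms. Naturality of $\kappa$ is immediate from naturality of $i$, $i^{-1}$ and $\lambda'$. The unit law $\kappa \circ \theta_F = F\theta$ reduces, via $i^{-1}_F \circ \theta_F = \eta'_F$ and $Fi \circ F\eta' = F\theta$, to the unit law for $\lambda'$; the multiplication law follows similarly from the one for $\lambda'$ together with $i\circ \mu' = \nu \circ ii$ and the inverse identity for $i^{-1}$. Conceptually, this is conjugation by an invertible 1-cell in the 2-category of monads, which preserves the distributive-law structure, so $\kappa$ is a distributive law of $\K$ over $F$.

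For the compatibility square, $i \circ q$ is a monad morphism as a composite of monad morphisms, and
\[
  \kappa \circ (iq)_F \;=\; Fi \circ \lambda' \circ i^{-1}_F \circ i_F \circ q_F \;=\; Fi \circ \lambda' \circ q_F \;=\; Fi \circ Fq \circ \lambda \;=\; F(iq) \circ \lambda,
\]
where the third equality uses $q\colon \lambda \To \lambda'$. For uniqueness, if $\kappa'$ is any distributive law for which $i \circ q$ is a morphism of distributive laws, the same computation gives $\kappa' \circ i_F \circ q_F = Fi \circ Fq \circ \lambda = \kappa \circ i_F \circ q_F$. Since $q_F$ is an epi in $\C$ by Assumption~\ref{assumps} and $i_F$ is an isomorphism, both can be right-cancelled, yielding $\kappa' = \kappa$. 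The argument is essentially formal; the only minor bookkeeping obstacle is the verification of the monad axioms for $\kappa$, which is a standard conjugation calculation.
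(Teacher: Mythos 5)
Your proof is correct and follows essentially the same route as the paper: the paper also defines $\kappa$ by conjugating with the isomorphism $i$ (writing $\kappa = Fi \circ \lambda \circ i^{-1}$, where $\lambda$ is evidently a typo for $\lambda'$, as your version makes explicit) and leaves the axiom-checking as an ``easy but tedious exercise.'' You supply the details the paper omits, including the compatibility square, the uniqueness argument via epi-cancellation of $q_F$, and the observation that preservation of $\E$ must be assumed for existence.
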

\begin{proof}
The distributive law $\kappa \colon KF \To KF$ is defined as $\kappa = Fi \circ \lambda \circ i^{-1}$. The proof
proceeds by checking that $\kappa$ indeed satisfies the defining axioms of a distributive law, which is an easy but
tedious exercise.
\end{proof}

Theorem~\ref{thm:lambda'-dlaw}
says that if $\lambda$ preserves the equations $\E$,
then we can \emph{present $\lambda'$ as ``$\lambda$  modulo equations''}.
We illustrate this with an example.

\begin{example}[Stream calculus]\label{ex:stream-calc}
Behavioural differential equations are used extensively in \cite{Rut03:TCS-bde,Rut05:MSCS-stream-calc}
to define streams and stream operations. Here, the behaviour functor is $FX = \bbR \times X$
whose final coalgebra $\tup{\bbR^\omega,\zeta}$ consists of streams over the real numbers
together with the map $\zeta(\sigma) = \tup{\sig(0),\sig'}$
which maps a stream $\sig$ to its initial value $\sig(0)$ and derivative $\sig'$.

Consider the following system of behavioural differential equations
where $[a]$, $\X$, $\sigma$ and $\tau$ denote streams over the real numbers.
\begin{equation}\label{eq:sde-example}
\begin{array}{rclcrcl}
[a](0) &=& a, & \quad & [a]' &=& [0], \qquad\qquad\qquad\qquad \forall a \in \bbR\\
\X(0) &=& 0, && \X' &=& [1],\\
(\sig+\tau)(0) &=& \sig(0)+\tau(0), && (\sig+\tau)' &=& \sig'+\tau',\\
(\sig\times\tau)(0) &=& \sig(0)\cdot\tau(0), & &
(\sig\times\tau)' &=& (\sig'\times [\tau(0)]) + ((\sig'\times (\X \times \tau')) +\\
&&&&&& \hfill ([\sig(0)]\times\tau'))
\end{array}
\end{equation}
The behavioural differential equations in \eqref{eq:sde-example} define the constant streams
$[a] = (a,0,0,\ldots)$ for all $a \in \bbR$,
$\X=(0,1,0,0,\ldots)$,
pointwise addition and convolution product of streams.
Note that the convolution product is here defined differently than in \cite{Rut03:TCS-bde,Rut05:MSCS-stream-calc}.
We explain this choice at the end of the example.

Since we are defining $\bbR$ many streams $[a]$, one constant stream $\X$ and two binary operations ($+$ and $\times)$, the signature functor is
$\Sig(X) = \bbR + 1 + (X \times X) + (X \times X)$, and \eqref{eq:sde-example} corresponds to a natural transformation $\rho\colon \Sig F \To F T$ where $T$ is the functor part of the free monad $\T$ over $\Sig$ (that is, $TX$ is the set of all $\Sig$-terms over variables in $X$). The components of $\rho$ are given by:
\begin{equation}
\begin{array}{rclcrcl}
\rho^{[a]}_X & = &  \langle a, [0] \rangle\\[.3em]
\rho^{\X}_X & = &  \langle 0, [1] \rangle\\[.3em]
\rho^{+}_X(\langle a,x \rangle,\langle b,y \rangle) & = &
  \langle a+b, x+y \rangle\\[.3em]
\rho^{\times}_X(\langle a,x \rangle,\langle b,y \rangle) & = &
  \langle a \cdot b, (x \times [b]) + ((x \times (\X \times y)) + ([a] \times y)) \rangle
\end{array}
\end{equation}
As described at the end of section~\ref{ssec:dls-basic}, such a $\rho$ is a simple SOS specification, and it uniquely induces a distributive law $\lambda \colon TF \To FT$. This $\lambda$ is essentially the inductive extension of $\rho$ from terms of depth 1 to arbitrary terms.
Let $\E$ be given by the following axioms where $V=\{v,u,w\}$ and $a,b \in \bbR$ (see
Example~\ref{ex:equations} for an explanation of how this corresponds to a functor with two natural transformations):
\begin{equation}\label{eq:semiring-axioms}
\begin{array}{lll}
(v+u)+w = v+(u+w) \quad
& [0]+v = v
 & v+u = u+v
\\
(v\times u)\times w = v\times(u\times w)
& [1] \times v = v \qquad
& v \times u = u \times v
\\
v\times (u + w) = (v \times u) + (v\times w) \qquad
& [0] \times v = [0]
\\
{[a+b] = [a]+[b]} \quad
& [a \cdot b] = [a] \times [b] \quad
\end{array}
\end{equation}
$\E$ consists of the \emph{commutative} semiring axioms
together with axioms stating the inclusion of the underlying semiring of the reals.
We would like to apply Theorem~\ref{thm:lambda'-dlaw} to obtain a distributive
law $\lambda'$ for the quotient monad $\T'$ arising from $\T$ and $\E$.
We show that $\lambda$ preserves $\E$.
Let $\tup{a,x}, \tup{b,y}, \tup{c,z} \in FX$ for some set $X$.
First note that for $F = \bbR\times \Id$,
$\tup{r_1,t_1} \;\Rel{F}(\Eq{X})\;\tup{r_2,t_2}$ iff
$r_1=r_2$ and $t_1 \Eq{X} t_2$.
It is straightforward to check preservation of the axioms that only concern addition,
as well as of $[1] \times v = v$, $[0] \times v = [0]$ and $v \times u = u \times v$.
We show that $[a \cdot b] = [a] \times [b]$ is preserved:
\[ \begin{array}{rcl}
\lambda_X([a]\times [b])
&=&
\tup{a\cdot b , [0]\times [b] + [0] \times \X \times [0] + [a] \times [0]}
\\
&\Rel{F}(\Eq{X})&
\tup{a\cdot b , [0]}
=
\lambda_X([a \cdot b])
\end{array}\]
We check that $\lambda$ preserves the distribution axiom:
\[ \begin{array}{lcl}
\multicolumn{3}{l}{\lambda_X(\tup{a,x} \times (\tup{b,y}+ \tup{c,z}))}
\\
\quad & = & \tup{a \cdot (b+c), (x \times [b+c]) + (x \times X \times (y+z)) + [a]\times (y+z)}
\\
\quad & \Rel{F}(\Eq{X}) &
\tup{a \cdot (b+c), (x \times [b+c]) + (x \times X \times y) + (x \times X \times z) + \\
&& \qquad\qquad\qquad  \qquad\qquad\qquad \hfill([a]\times y) + ([a]\times z)}
\\
& \Rel{F}(\Eq{X}) &
\tup{(a\cdot c) + (b \cdot c),
    (x \times [b]) + (x \times X \times y) + ([a]\times y) + \\
&& \qquad\qquad\qquad  \qquad\qquad\qquad   (x \times[c]) + (x \times X \times z) + ([a]\times z)}
\\
& = &\\
\multicolumn{3}{l}{\lambda_X((\tup{a,x} \times \tup{b,y}) + (\tup{a,x} \times \tup{c,z}))}
\end{array}\]
Note that we used $[a+b] = [a]+[b]$.
Similarly, preservation of $\times$-associativity can be verified, and it uses
the axiom $[a \cdot b] = [a] \times [b]$.
We have thus shown that $\lambda$ preserves $\E$, and it follows, in particular, that
$\tup{\bbR^\omega,+,\times, [0],[1]}$ is a commutative semiring.
This was shown directly in \cite{Rut05:MSCS-stream-calc},
but the proof uses bisimulation-up-to as well as the fundamental theorem of stream calculus,
which cannot be added as an equation.
In our approach we construct a distributive law, and obtain not only this result but also the soundness of the bisimulation-up-to
technique~\cite{RBR13}, and the existence of unique solutions to corecursive equations
$\phi\colon X \to FT'X$ (see Section~\ref{ssec:corecursive-eqs}).

The derivative of the convolution product is usually
(cf.~\cite{Rut03:TCS-bde,Rut05:MSCS-stream-calc}) specified as:
\begin{equation}\label{eq:conv-gsos}
(\sig\times\tau)' = (\sig'\times\tau) + ([\sig(0)]\times\tau')
\end{equation}
which corresponds to a stream GSOS-rule
$\Sig(\Id\times \bbR \times \Id) \To \bbR \times T(-)$, and thus to a distributive law
over the cofree copointed functor.
However, with this definition, we could not show that the commutativity of $\times$
is preserved although all other axioms remain preserved.
Hence a given $\lambda$ does not necessarily satisfy all equations
that are valid on the final $F$-coalgebra.
\defiEnd
\end{example}

In the above example the monad under consideration is defined by operations and equations. In Example~\ref{ex:CFG-lambda} below we will see an example of a monad that has an independent definition, but where a presentation by operations and equations simplifies the construction of a distributive law considerably.

\begin{remark}
  The concrete proof method for preservation of equations bears a close
  resemblance to \emph{bisimulation up to congruence}~\cite{RBR13}, in that one must show
  that for every pair in the (image of the) equations, its derivatives are related by the least
  \emph{congruence} $\Eq{X}$ instead of just the equivalence relation induced by the equations.
\end{remark}

\begin{example}\label{exle:not-preserve}
As we discussed at the end of Example~\ref{ex:stream-calc} (regarding the definition of the convolution product), it is not always possible to show that a given $\lambda$ preserves all equations that hold in the final coalgebra. Now we give another concrete example of this fact. This
example again concerns stream systems, i.e., coalgebras for the functor $FX = \bbR \times X$.
We define the constant stream of zeros by three different constants $n_1, n_2$ and $n_3$ by the following behavioural differential equations:
\[
\begin{array}{rclcrclcrclcrclcrclcrcl}
n_1(0) &=& 0, &  n_1' &=& n_1  & \quad &
{n}_2(0) &=& 0, & n_2' &=& n_3  & \quad &
{n}_3(0) &=& 0, & n_3' &=& n_3
\end{array}
\]
The corresponding signature functor is thus $\Sigma X = 1 + 1 + 1$, and the above specification gives rise to
a distributive law $\lambda \colon TF \To FT$ where $T$ is (the functorial component of) the free monad over $\Sigma$. Now consider
the equation $n_1 = n_2$; this clearly holds when interpreted in the final coalgebra. However, this
equation is not preserved by $\lambda$. To see this, notice that $\lambda(n_1) = \langle 0, n_1 \rangle$
and $\lambda(n_2) = \langle 0, n_3 \rangle$, but $n_1 \not \Eq{X} n_3$, so $\lambda(n_1)$ and $\lambda(n_2)$
are not related by $\Rel{F}(\Eq{X})$.
\defiEnd
\end{example}

\subsection{Distributive Laws over Copointed Functors}

We now show that our main results hold as well for
distributive laws of monads over \emph{copointed} functors.
This extends our method to deal with operations specified in the
abstract GSOS format, such as language concatenation.

\newcommand{\propdlcopointed}{
Theorem~\ref{thm:lambda'-dlaw} and Corollary~\ref{cor:pres-dl} hold as well for any distributive law of a monad over a copointed functor.
}
\begin{proposition}\label{prop:dl-copointed}
  \propdlcopointed
\end{proposition}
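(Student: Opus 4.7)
The plan is to reuse the construction of Theorem~\ref{thm:lambda'-dlaw} verbatim to produce $\lambda'\colon T'F\Rightarrow FT'$, and then to verify the one extra axiom that being a distributive law over a copointed functor requires, namely $\epsilon_{T'}\circ\lambda' = T'\epsilon$. Note that preservation of equations in the copointed setting is defined by exactly the same diagram \eqref{eq:preservation-E}; only the target class of distributive laws has an additional requirement. So starting from a $\lambda\colon TF\Rightarrow FT$ satisfying the copointed axiom and preserving $\E$, Theorem~\ref{thm:lambda'-dlaw} already gives a unique $\lambda'\colon T'F\Rightarrow FT'$ that is a distributive law of the monad $\T'$ over the plain functor $F$ and a morphism $q\colon\lambda\Rightarrow\lambda'$.

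It then remains to check the copointed identity for $\lambda'$. I would precompose with the epi $q_{FX}$ (epi by Assumption~\ref{assumps} applied to the free algebra $\tup{TFX,\mu_{FX}}$) and compute
\[
\epsilon_{T'X}\circ\lambda'_X\circ q_{FX}
= \epsilon_{T'X}\circ Fq_X\circ\lambda_X
= q_X\circ\epsilon_{TX}\circ\lambda_X
= q_X\circ T\epsilon_X
= T'\epsilon_X\circ q_{FX},
\]
where the first equality is the defining square \eqref{eq:lambda'-square} of $\lambda'$, the second is naturality of $\epsilon$, the third is the copointed axiom for $\lambda$, and the fourth is naturality of $q$. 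Cancelling $q_{FX}$ on the right yields $\epsilon_{T'X}\circ\lambda'_X = T'\epsilon_X$, so $\lambda'$ is indeed a distributive law over the copointed functor $\tup{F,\epsilon}$. The converse direction (from existence of $\lambda'$ to preservation of $\E$) is identical to the proof of Theorem~\ref{thm:lambda'-dlaw}, as it only uses the square in \eqref{eq:lambda'-square}.

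For Corollary~\ref{cor:pres-dl} in the copointed setting, I would define $\kappa = Fi\circ\lambda'\circ i^{-1}$ exactly as before. The monad-distributive-law axioms for $\kappa$ are verified as in the original corollary. The copointed axiom for $\kappa$ follows immediately from the copointed axiom just established for $\lambda'$ together with naturality of $\epsilon$: $\epsilon_K\circ\kappa = \epsilon_K\circ Fi\circ\lambda'\circ i^{-1} = i\circ\epsilon_{T'}\circ\lambda'\circ i^{-1} = i\circ T'\epsilon\circ i^{-1} = K\epsilon$. Uniqueness follows as before by epicness of the components of $i\circ q$. I expect no real obstacle here; the only point that deserves care is making sure the codomain epi used to cancel (namely $q_{FX}$, or equivalently $i_{FX}\circ q_{FX}$ in the corollary) is indeed epi, which is exactly what Assumption~\ref{assumps} is designed to give.
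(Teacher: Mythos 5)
Your proposal is correct and follows essentially the same route as the paper: invoke Theorem~\ref{thm:lambda'-dlaw} for the underlying plain functor, then verify the extra copointed axiom by precomposing with the epi $q_{FX}$ and chasing through the square \eqref{eq:lambda'-square}, naturality of $\epsilon$, the copointed axiom for $\lambda$, and naturality of $q$ --- which is exactly the paper's diagram chase written out equationally. Your explicit verification of the copointed axiom for $\kappa$ in the corollary is a detail the paper leaves to the reader, and it is correct (the last step also uses naturality of $i$).
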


\begin{proof}
Let $\tup{H, \epsilon}$ be a copointed functor and $\lambda \colon TH \Rightarrow HT$ a distributive law
of $\T$ over $\tup{H, \epsilon}$. Suppose $\lambda$ preserves equations $E$. By Theorem~\ref{thm:lambda'-dlaw} then there is a distributive law $\lambda'$
of $\T'$ over $H$ such that $q \colon T \To T'$ is a morphism of distributive laws. In order to show that
$\lambda'$ is a distributive law of $\T'$ over $\tup{H, \epsilon}$ we only need to prove
that $\lambda'$ satisfies the additional axiom, i.e., that the right crescent in the following diagram commutes:
$$
  \xymatrix{
    THX \ar[r]^{q_{HX}} \ar[d]^{\lambda_X} \ar@/_2pc/[dd]_{T\epsilon_X} & T'HX \ar[d]_{\lambda'_X} \ar@/^2pc/[dd]^{T'\epsilon_X} \\
    HTX \ar[r]^{Hq_{X}} \ar[d]^{\epsilon_{TX}}  & HT'X \ar[d]_{\epsilon_{T'X}} \\
    TX \ar[r]^{q_{X}}                           & T'X
  }
$$
The outermost part commutes by naturality of $q$, the upper square
commutes since $\lambda$ is a morphism of distributive laws, and the
lower square commutes by naturality of $\epsilon$, and the left
crescent commutes by the fact that $\lambda$ is a distributive law of
$\T$ over $\tup{H,\epsilon}$. Consequently we have
$\epsilon_{T'X} \circ \lambda'_X \circ q_{HX} = T'\epsilon_X \circ
q_{HX}$,
and since $q_{HX}$ is an epi we obtain
$\epsilon_{T'X} \circ \lambda'_X = T'\epsilon_X$ as desired.

For Corollary~\ref{cor:pres-dl} one needs to add to its proof a check that the distributive law satisfies the additional
axiom as well, which is again rather easy to do.
\end{proof}

\begin{example}[Context-free languages]\label{ex:CFG-lambda}
A context free grammar (in Greibach normal form) consists of a finite set $A$ of terminal
symbols, a (finite) set $X$ of non-terminal symbols, and a map
$\langle o,t \rangle \colon X \to 2 \times \Pow_\omega(X^*)^A$,
i.e., it is a coalgebra for the behavior functor $F(X) = 2 \times X^A$ composed
with the idempotent semiring monad $\Pow_\omega((-)^*)$ from Example~\ref{ex:ism}.
Intuitively, $o(x) = 1$ means
that the variable $x$ can generate the empty word, whereas $w \in t(x)(a)$ if and
only if $x$ can generate $aw$, cf.~\cite{WBR:CF-CALCO}.

It is a rather difficult task to describe concretely a distributive law of $T'=\Pow_\omega((-)^*)$
over $F$ (or $\Id \times F$) defining the
sum $+$ and sequential composition $\cdot$ of context-free grammars. More conveniently, since
we have seen in Example~\ref{ex:ism} that the monad $\Pow_\omega((-)^*)$
can be presented
by the operations and axioms of idempotent semirings, we proceed by
defining a distributive law $\lambda$ of the free monad $\T_\Sig$ generated by the semiring signature functor
$\Sig(X)= 1 + 1 + (X \times X)+ (X \times X)$ over the cofree copointed functor
$\tup{\Id\times F, \pi_1}$, and show that $\lambda$ preserves the semiring
axioms.
We define $\lambda$ as the distributive law that corresponds to the natural transformation
$\rho\colon \Sig(\Id \times F) \To FT$ whose components are given by:
\begin{equation}
\begin{array}{rclcrcl}
\rho^{\cns{0}}_X & = &  \langle 0, a \mapsto \emptyset \rangle\\[.3em]
\rho^{\cns{1}}_X & = &  \langle 1, a \mapsto \emptyset \rangle\\[.3em]
\rho^{+}_X(\langle x,o,f \rangle,\langle y,p,g \rangle) & = &
  \langle \text{max}\{ o, p \}, a \mapsto f(a) + g(a) \rangle\\[.3em]
\rho^{\cdot}_X(\langle x,o,f \rangle,\langle y,p,g \rangle) & = &
  \left \langle \text{min}\{ o, p \}, a \mapsto  \begin{cases}
                        f(a) \cdot y        & \mbox{if $p = 0$}\\[.3em]
                        f(a) \cdot y + g(a) & \mbox{if $p=1$}
                        \end{cases}
  \right \rangle
\end{array}
\end{equation}
We proceed to show that $\lambda$ preserves the defining equations of idempotent semirings.
We treat here only the case of distributivity, i.e., $u\cdot (v + w) = u \cdot v + u \cdot w$. To this end, let $X$ be arbitrary and suppose $\tup{x,o,d}, \tup{y,p,e}, \tup{z,q,f} \in X \times FX$. Notice that either $o=0$ or $o=1$; we treat both cases separately:
$$
\begin{array}{lcl}
  \multicolumn{3}{l}{\lambda(\tup{x, 0, d} \cdot (\tup{y, p, e} + \tup{z, q, f}))} \\
  \quad &=& (x \cdot (y + z), 0, a \mapsto d(a) \cdot (y + z)) \\
    &\Rel{F}(\Eq{X})& (x \cdot y + x \cdot z, 0, a \mapsto d(a) \cdot y + d(a) \cdot z) \\
    &=& \lambda(\tup{x, 0, d} \cdot \tup{y, p, e} + \tup{x, 0, d} \cdot \tup{z, q, f})\\
    & & \\
  \multicolumn{3}{l}{\lambda(\tup{x, 1, d} \cdot (\tup{y, p, e} + \tup{z, q, f}))}  \\
  \quad  &=& (x \cdot (y + z), p+q, a \mapsto d(a) \cdot (y + z) + (e(a) + f(a))) \\
    &\Rel{F}(\Eq{X})& (x \cdot y + x \cdot z, p+q, a \mapsto (d(a) \cdot y + d(a) \cdot z) + (e(a) + f(a))) \\
    &\Rel{F}(\Eq{X})& (x \cdot y + x \cdot z, p+q, a \mapsto (d(a) \cdot y + e(a)) + (d(a) \cdot z + f(a))) \\
    &=& \lambda(\tup{x, 1, d} \cdot \tup{y, p, e} + \tup{x, 1, d} \cdot \tup{z, q, f})\,.
\end{array}
$$
In a similar way one can show that $\lambda$ preserves the other idempotent semiring equations. Thus, from
Proposition~\ref{prop:dl-copointed} and Corollary~\ref{cor:pres-dl} we obtain a distributive law $\kappa$ of $\Pow_\omega((-)^*)$ over $\Id \times F$ such
that $i \circ q\colon \lambda\To\kappa$  is a morphism of distributive laws, i.e., $\kappa$ is presented
by $\lambda$ and the equations of idempotent semirings.
\defiEnd
\end{example}

\subsection{Distributive Laws over Comonads}

A further type of distributive law, which generalizes all of the above, is that of a distributive law of a monad over a comonad.
These arise from GSOS laws as well as from \emph{coGSOS} laws, which allow to model operational rules which involve look-ahead
in the premises. We refer to~\cite{Klin11} for technical details and an example of a coGSOS format on streams. In this
subsection, we prove for future reference  that when constructing the quotient distributive law as above for
a distributive law over a comonad, the axioms are preserved, i.e., the quotient is again a distributive law over the comonad.

\newcommand{\propdlcomonad}{
Theorem~\ref{thm:lambda'-dlaw} and Corollary~\ref{cor:pres-dl} hold as well for any distributive law of a monad over a comonad.
}
\begin{proposition}\label{prop:dl-comonad}
  \propdlcomonad
\end{proposition}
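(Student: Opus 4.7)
The plan is to mirror the strategy of Proposition~\ref{prop:dl-copointed}. Given a distributive law $\lambda\colon TD \To DT$ of the monad $\T$ over a comonad $\tup{D,\epsilon,\delta}$ that preserves the equations $\E$, I would first apply Theorem~\ref{thm:lambda'-dlaw} to obtain a distributive law $\lambda'\colon T'D \To DT'$ of the quotient monad $\T'$ over the underlying endofunctor $D$, together with the fact that $q$ is a morphism of distributive laws, i.e., $\lambda'_X \circ q_{DX} = Dq_X \circ \lambda_X$. It then remains to verify the two additional comonad axioms for $\lambda'$, namely the counit equation $\epsilon_{T'} \circ \lambda' = T'\epsilon$ and the comultiplication equation $\delta_{T'} \circ \lambda' = D\lambda' \circ \lambda'_D \circ T'\delta$.

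Both axioms follow by the same technique used in Proposition~\ref{prop:dl-copointed}: precompose both sides with $q_{DX}$, which is epi by Assumption~\ref{assumps}, and reduce to the corresponding axiom for $\lambda$. For the counit, the chase is essentially identical to the copointed case. For the comultiplication, I would rewrite
\[
\delta_{T'X} \circ \lambda'_X \circ q_{DX} \;=\; \delta_{T'X} \circ Dq_X \circ \lambda_X \;=\; DDq_X \circ \delta_{TX} \circ \lambda_X
\]
using the defining square for $\lambda'$ and naturality of $\delta$, then apply the comultiplication axiom for $\lambda$, and finally repeatedly use the defining square for $\lambda'$ (at $X$ and at $DX$) together with naturality of $q$ at $\delta_X$ to arrive at $D\lambda'_X \circ \lambda'_{DX} \circ T'\delta_X \circ q_{DX}$, which is the right-hand side of the axiom precomposed with $q_{DX}$. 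Cancelling the epi yields the result.

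For Corollary~\ref{cor:pres-dl}, the comonad axioms for $\kappa = Di \circ \lambda' \circ i^{-1}$ transfer directly from those of $\lambda'$ by naturality of $\epsilon$ and $\delta$, since $i$ acts only on the monad side. I expect the main obstacle to be keeping the bookkeeping straight in the comultiplication chase, which involves components of $\lambda$ and $\lambda'$ at three distinct subscripts ($X$, $DX$, $DDX$) and uses the defining square for $\lambda'$ at two of them in careful succession.
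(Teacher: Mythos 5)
Your proposal is correct and follows essentially the same route as the paper: the paper first invokes Proposition~\ref{prop:dl-copointed} to dispose of the counit axiom and then verifies the comultiplication axiom by precisely the diagram chase you describe (defining square for $\lambda'$ at $X$ and $DX$, naturality of $q$ and $\delta$, the comultiplication axiom for $\lambda$, and cancellation of the epi $q_{DX}$). The only cosmetic difference is that you re-derive the counit axiom directly from Theorem~\ref{thm:lambda'-dlaw} rather than citing the copointed case, which amounts to the same calculation.
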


\begin{proof}
Let $\tup{D, \epsilon, \delta}$ be a comonad and $\lambda \colon TD \Rightarrow DT$ a distributive law of the monad $\tup{T,\eta,\mu}$
over the comonad $\tup{D,\epsilon,\delta}$.
Suppose $\lambda$ preserves equations $\E$. By Proposition~\ref{prop:dl-copointed} there is a distributive law $\lambda'$
of $\T'$ over the copointed functor $\tup{D,\epsilon}$. To show that $\lambda'$ is a distributive law
over the comonad $\tup{D, \epsilon, \delta}$, we need to check that the corresponding axiom holds.
$$
  \xymatrix{
    TD \ar[d]^{T\delta} \ar@/_3.5pc/[ddd]_{q_{D}} \ar[rr]^\lambda & & DT \ar[d]^{\delta_T} \ar@/^3.5pc/[ddd]^{Dq} \\
    TDD \ar[r]^{\lambda_D} \ar[d]^{q_{DD}} & DTD \ar[r]^{D\lambda} \ar[d]^{Dq_{D}} & DDT \ar[d]^-{DDq} \\
    T'DD \ar[r]^{\lambda'_D} & DT'D \ar[r]^{D\lambda'} & DDT' \\
    T'D \ar[u]_{T'\delta} \ar[rr] ^{\lambda'}& & DT' \ar[u]_{\delta_{T'}}
  }
$$
The outer square as well as the two small inner squares commute by the 
fact that $q$ is a morphism of distributive laws. The outer crescents 
commute since $q$ and $\delta$ are natural.
The upper rectangle commutes by the assumption that $\lambda$ is a distributive law
over the comonad. Checking that the lower rectangle
commutes, which is what we need to prove, is now an easy diagram chase, using that $q_{D}$ is epic.
\end{proof}


\section{Morphisms and Solutions}
\label{sec:quotients-of-bialgebras}

In this section, we show that morphisms of distributive laws
commute with solving corecursive equations.
In the case of monads with equations, this means that
first solving equations $\phi$ with respect to $\T$ and then forming the quotient of
the solution bialgebra
is the same as first forming the quotient of $\T$ and solving with respect to the
quotient monad $\T'$.

We first describe some functors that link the relevant categories
of bialgebras and corecursive equations.
Throughout this Section, we let $\T=\tup{T,\eta,\mu}$
and $\K = \tup{K,\theta,\nu}$ be monads; and
$\lambda\colon TF \To FT$ and $\kappa\colon KF \To FK$
be distributive laws of $\T$ and $\K$ over $F$, respectively.

If $\tau\colon \lambda\To\kappa$ is a morphism of distributive laws,
then precomposing with $\tau$ yields a functor:
\begin{equation}\label{eq:def-I}
\begin{array}{rrcl}
I \colon & \Bialg(\kappa) & \to & \Bialg(\lambda)\\
&  \xymatrix@1{KX \ar[r]^-{\alpha} & X \ar[r]^-{\beta} & FX}
 & \quad\mapsto\quad
 & \xymatrix@1{TX \ar[r]^-{\alpha\circ\tau_X} & X \ar[r]^-{\beta} & FX}
\end{array}
\end{equation}

 It follows from the naturality of $\tau$ and
  $F\tau\circ\lambda=\kappa\circ \tau F$ that
  $I$ takes a $\kappa$-bialgebra to a $\lambda$-bialgebra.
Similarly, postcomposing with $F\tau$ yields a functor between corecursive equations:
\begin{equation}\label{eq:def-Q}
\begin{array}{rrcl}
Q \colon & \Coalg(FT) & \to & \Coalg(FK)\\[.2em]
& \phi\colon X \to FTX & \quad \mapsto \quad & F\tau_X\circ\phi\colon X \to FKX
\end{array}
\end{equation}
Recall from Section~\ref{ssec:corecursive-eqs}, that
given a distributive law $\lambda\colon TF \To FT$,
the solutions of a corecursive equation $\phi\colon X \to FTX$
are characterised by morphisms from the $\lambda$-bialgebra
$\tup{TX,\mu_X,\ext{\phi}{\lambda}}$
whose $F$-coalgebra structure given by
\begin{equation}
\ext{\phi}{\lambda} = F\mu_X \circ \lambda_{TX} \circ T\phi
\end{equation}
This yields a functor (see, e.g.,\cite[Lem.~5.4.11]{BJ:Coalg-book-draft}):
\begin{equation}\label{eq:def-G}
\begin{array}{rrcl}
G_\lambda \colon & \Coalg(FT) & \to & \Bialg(\lambda)\\[.2em]
& \tup{X,\phi} & \quad\mapsto\quad  & \tup{TX,\mu_X,\ext{\phi}{\lambda}}
\end{array}
\end{equation}
We can go in the opposite direction by using the monad unit,
\begin{equation}\label{eq:def-V}
\begin{array}{rrcl}
V_\eta \colon & \Bialg(\lambda) & \to & \Coalg(FT) \\[.2em]
&  \tup{X,\alpha,\beta} & \quad \mapsto\quad  & \tup{X, F\eta_X\circ\beta}
\end{array}
\end{equation}
which decomposes into the functor $U\colon \Bialg(\lambda)\to \Coalg(F)$
that forgets algebra structure, and
\begin{equation}\label{eq:def-J}
\begin{array}{rrcl}
J_\eta \colon & \Coalg(F) & \to & \Coalg(FT) \\[.2em]
&  \tup{X,\beta} & \quad\mapsto\quad  & \tup{X, F\eta_X\circ\beta}
\end{array}
\end{equation}
The following diagram summarises the situation:
\begin{equation}\label{eq:IQGV-diagram}
\xymatrix@R=1.5em{
\Bialg(\lambda) \ar@/^1.2pc/[r]^-{V_\eta} \ar@/^5pc/[drr]_-{U} \ar@{}[r]|{}
& \Coalg(FT) \ar@/^1.2pc/[l]^-{G_\lambda}  \ar[dd]^-{Q} &
\\
&& \Coalg(F)
\ar[ul]_-{J_\eta}\ar[dl]^-{J_\theta} \\
\Bialg(\kappa) \ar@/^1.2pc/[r]^-{V_\theta} \ar[uu]_-{I} \ar@{}[r]|{}
\ar@/_5pc/[urr]^-{U}
& \Coalg(FK) \ar@/^1.2pc/^-{G_\kappa}[l]
}
\end{equation}
We mention that $QV_\eta I = V_\theta$ since $\tau$ is compatible with the
units of $\T$ and $\K$. 

Morphisms of distributive laws are defined to be monad maps, and hence
respect the algebraic structure. The next proposition shows that,
as one might expect, they also respect the coalgebraic structure,
and hence morphisms of distributive laws induce
morphisms between bialgebras.

\newcommand{\proptaubialgebramorphism}{
If $\tau\colon \lambda \To \kappa$ is a morphism of distributive laws,
then for all
  $\phi:X\to FTX$  we have that $\tau_{X}$ is a $\lambda$-bialgebra morphism
$
\tau_X\colon
G_\lambda(\phi) \to I G_\kappa Q(\phi)
$
or, equivalently, an $F$-coalgebra morphism
$\tau_X \colon \tup{TX,\ext{\phi}{\lambda}}\to \tup{KX,\ext{(Q\phi)}{\kappa}}$.
}
\begin{proposition}\label{prop:tau-bialgebra-morphism}
\proptaubialgebramorphism
\end{proposition}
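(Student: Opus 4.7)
The plan is to verify the two conditions separately: that $\tau_X$ is a $\T$-algebra morphism from $\tup{TX,\mu_X}$ to $I G_\kappa Q(\phi)$, and that it is an $F$-coalgebra morphism from $\tup{TX,\ext{\phi}{\lambda}}$ to $\tup{KX,\ext{(Q\phi)}{\kappa}}$. Once both are established, the two reformulations in the statement are equivalent by definition of bialgebra morphism, and the coalgebra formulation is the nontrivial content.

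For the algebra part, the algebra structure of $IG_\kappa Q(\phi)$ has carrier $KX$ and algebra map $\nu_X \circ \tau_{KX}$ (precomposition with $\tau$ via $I$, applied to the free $\K$-algebra structure $\nu_X$). Hence the required equation reads $\tau_X \circ \mu_X = \nu_X \circ \tau_{KX} \circ T\tau_X$, which is precisely the monad morphism axiom $\nu \circ \tau\tau = \tau \circ \mu$ from diagram~\eqref{eq:monadmap}, written with the decomposition $\tau\tau = K\tau \circ \tau_T$ replaced by $\tau_K \circ T\tau$.

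For the coalgebra part, I would do a left-to-right rewrite of $\ext{(Q\phi)}{\kappa} \circ \tau_X = F\nu_X \circ \kappa_{KX} \circ KF\tau_X \circ K\phi \circ \tau_X$ (using $Q\phi = F\tau_X \circ \phi$ and the definition of $\ext{(-)}{\kappa}$). Then successively apply: naturality of $\tau$ at $\phi$ to turn $K\phi \circ \tau_X$ into $\tau_{FTX} \circ T\phi$; naturality of $\tau$ at $F\tau_X$ to turn $KF\tau_X \circ \tau_{FTX}$ into $\tau_{FKX} \circ TF\tau_X$; the morphism-of-distributive-laws square~\eqref{eq:morphism-dl-square} to turn $\kappa_{KX} \circ \tau_{FKX}$ into $F\tau_{KX} \circ \lambda_{KX}$; and naturality of $\lambda$ at $F\tau_X$ to turn $\lambda_{KX} \circ TF\tau_X$ into $FT\tau_X \circ \lambda_{TX}$. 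This yields
\[
\ext{(Q\phi)}{\kappa}\circ\tau_X \;=\; F(\nu_X \circ \tau_{KX} \circ T\tau_X)\circ \lambda_{TX}\circ T\phi.
\]
Finally, invoke the monad-morphism identity again to replace $\nu_X \circ \tau_{KX} \circ T\tau_X$ by $\tau_X \circ \mu_X$, so that the right-hand side collapses to $F\tau_X \circ F\mu_X \circ \lambda_{TX} \circ T\phi = F\tau_X \circ \ext{\phi}{\lambda}$, as required.

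There is no real obstacle: the argument is a pure diagram chase and each step uses exactly one of the three defining properties of a morphism of distributive laws (plus the naturality of $\lambda$ itself). The only mild subtlety is keeping track of which of the two decompositions of $\tau\tau$ is being used in the algebra-morphism step versus the tail of the coalgebra chase; both follow from the commutativity of \eqref{eq:monadmap}. The statement is essentially the ``naturality in $\phi$'' of Bartels-style $\lambda$-coinduction along morphisms of distributive laws, so I would also note at the end that this immediately entails $QV_\eta I = V_\theta$, consistent with the remark preceding the proposition.
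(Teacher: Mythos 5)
Your proof is correct and takes essentially the same route as the paper's: a direct diagram chase through naturality of $\tau$, the square \eqref{eq:morphism-dl-square}, naturality of the distributive laws, and the multiplication axiom for monad morphisms. The only differences are cosmetic --- you traverse the commuting diagram along the other diagonal (instancing \eqref{eq:morphism-dl-square} at $KX$ and using naturality of $\lambda$ where the paper uses the instance at $TX$ and naturality of $\kappa$, hence the other decomposition of $\tau\tau$) and you additionally spell out the algebra-morphism half, which the paper leaves implicit.
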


\begin{proof}
We show that
$\tau_X \colon \tup{TX,\ext{\phi}{\lambda}}\to \tup{KX,\ext{(Q\phi)}{\kappa}}$
is an $F$-coalgebra morphism:
\[\xymatrix@C=3.5em{
TX \ar[d]_-{T\phi} \ar[r]^-{\tau_X} \ar@{}[dr]|{(\text{nat.}\tau)}
& KX \ar[d]^-{K\phi} \ar@/^1.5pc/[dr]^-{KQ\phi} \ar@{}[dr]|{(\text{def.}Q\phi)}
\\
TFTX \ar[d]_-{\lambda_{TX}} \ar[r]^-{\tau_{FTX}} \ar@{}[dr]|{\eqref{eq:morphism-dl-square}}
& KFTX \ar[d]^-{\kappa_{TX}} \ar[r]^-{KF\tau_X} \ar@{}[dr]|{(\text{nat.}\lambda)}
& KFKX \ar[d]^-{\kappa_{KX}}
\\
FT^2X \ar[d]_-{F\mu_X} \ar[r]^-{F\tau_{TX}}  \ar@{}[drr]|{F(\tau~\text{monad morph.})}
& FKTX  \ar[r]^-{FK\tau_X}
& FKKX \ar[d]^-{F\nu_X}
\\
FTX \ar[rr]_-{F\tau_X} && FKX
}\]
\end{proof}

It follows that the unique $\lambda$-bialgebra morphism
$g \colon \tup{TX,\mu_X,\ext{\phi}{\lambda}} \to \tup{Z,\alpha,\zeta}$
into the final $\lambda$-bialgebra $\tup{Z,\alpha,\zeta}$
factors as $g = g'\circ\tau_X$, where $g'$ is the final $\lambda$-bialgebra
morphism from $IG_\kappa Q(\phi)$,
as shown here:
\begin{equation}\label{eq:solution-quotient}
\xymatrix@C=1em{
& T^2X \ar[rr]^-{T\tau_X} \ar[d]^-{\mu_X}
&& TKX \ar[rr]^-{Tg'}  \ar[d]^-{\nu_X\circ \tau_{KX}}
&& TZ \ar[d]^-{\alpha}
\\
X \ar[r]^-{\eta_X} \ar[dr]_-{\phi}
& TX \ar[rr]^-{\tau_X} \ar[d]^-{\ext{\phi}{\lambda}}
&& KX \ar[rr]^-{g'}  \ar[d]^-{\ext{(Q\phi)}{\kappa}}
&& Z \ar[d]^-{\zeta}
\\
& FTX \ar[rr]^-{F\tau_X}
&& FKX \ar[rr]^-{Fg'}
&& FZ
}\end{equation}
Hence by Proposition~\ref{prop:solution},
every solution of $\phi$ in the final $\lambda$-bialgebra
yields a solution of $Q\phi$, and vice versa.

When $\tau\colon \lambda\To\kappa$ arises from
a set of preserved equations $\E$ as in Section~\ref{sec:quotients-of-dls}
(with $\kappa=\lambda'$),
then Proposition~\ref{prop:tau-bialgebra-morphism}
says that $IG_{\kappa} Q(\phi)$ is a quotient of the
``free'' $\lambda$-bialgebra $\tup{TX,\mu_X,\ext{\phi}{\lambda}}$,
and in particular, the congruence $\Eq{X}$ is an $F$-behavioural equivalence.
In this case, $Q\phi$ is the corecursive equation
obtained by reading the right-hand side of $\phi$ modulo equations in $E$.
In other words,
forming the quotient of the solution of the equation $\phi$ is the same
as solving the quotiented equation $Q\phi$.

\begin{example}\label{ex:CFG-lambda-2}
Recall from Example~\ref{ex:CFG-lambda} that $i \circ q \colon T \Rightarrow \Pow_\omega(X^*)$ is a morphism
of distributive laws. By Proposition~\ref{prop:tau-bialgebra-morphism} we have the following commuting
diagram for any corecursive equation $\phi \colon X \rightarrow 2 \times (TX)^A$:
\begin{equation}\label{eq:solution-quotient-CFG}
\xymatrix@C=1.8em{
X \ar[r]^-{\eta_X} \ar[dr]_-{\phi}
& TX \ar[rr]^-{(i \circ q)_X} \ar[d]^-{\ext{\phi}{\lambda}}
&& \Pow_\omega(X^*) \ar[rr]  \ar[d]^-{\ext{(Q\phi)}{\kappa}}
&& \Pow(A^*) \ar[d]^-{\zeta}
\\
& 2 \times (TX)^A \ar[rr]^-{~~\id \times ((i \circ q)_X)^A}
&& 2 \times (\Pow_\omega(X^*))^A \ar[rr]
&& 2 \times \Pow(A^*)^A
}\end{equation}
Notice that a context-free grammar $\langle o,t \rangle \colon X \rightarrow 2 \times \Pow_\omega(X^*)^A$ can be represented
by a $\phi\colon X \to 2 \times (TX)^A$ such that $Q\phi = \langle o, t\rangle$, since $i \circ q$ is surjective. This
gives the expected correspondence between two of the three different coalgebraic approaches to context-free languages introduced in~\cite{WBR:CF-CALCO}
(the third approach is about fixed-point expressions and as such is outside the scope of this paper).
\defiEnd
\end{example}

Similarly, the algebraic structure induced by $\lambda$ on the
final $F$-coalgebra factors uniquely through the algebraic structure
induced by $\kappa$.
\newcommand{\propalgsfinal}{
  Let $\tau\colon \lambda \To \kappa$ be a morphism of distributive laws, and
  let $\alpha \colon TZ \rightarrow Z$ and $\alpha' \colon KZ \rightarrow Z$
  be the algebras induced by $\lambda$ and $\kappa$ respectively on the final coalgebra $\tup{Z, \zeta}$.
  Then $\alpha = \alpha' \circ \tau_Z$.
}
\begin{proposition}\label{prop:algs-final}
  \propalgsfinal
\end{proposition}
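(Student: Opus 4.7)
The plan is to exploit the coinductive characterisation of $\alpha$ and $\alpha'$ recalled at the end of Section~\ref{ssec:dls-basic}: $\alpha\colon TZ\to Z$ is the \emph{unique} $F$-coalgebra morphism from $\tup{TZ,\lambda_Z\circ T\zeta}$ to the final $F$-coalgebra $\tup{Z,\zeta}$, and symmetrically $\alpha'\colon KZ\to Z$ is the unique such morphism from $\tup{KZ,\kappa_Z\circ K\zeta}$ to $\tup{Z,\zeta}$. It therefore suffices to show that the composite $\alpha'\circ\tau_Z$ is an $F$-coalgebra morphism from $\tup{TZ,\lambda_Z\circ T\zeta}$ to $\tup{Z,\zeta}$; finality then forces $\alpha=\alpha'\circ\tau_Z$.

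Verifying the coalgebra morphism property is a short diagram chase that assembles three ingredients in order: (i) $\alpha'$ is an $F$-coalgebra morphism, giving $\zeta\circ\alpha'=F\alpha'\circ\kappa_Z\circ K\zeta$; (ii) naturality of $\tau$ at the arrow $\zeta\colon Z\to FZ$, giving $K\zeta\circ\tau_Z=\tau_{FZ}\circ T\zeta$; (iii) the morphism-of-distributive-laws square~\eqref{eq:morphism-dl-square} instantiated at $Z$, giving $\kappa_Z\circ\tau_{FZ}=F\tau_Z\circ\lambda_Z$. Pasting these three equalities yields $\zeta\circ(\alpha'\circ\tau_Z)=F(\alpha'\circ\tau_Z)\circ\lambda_Z\circ T\zeta$, as required.

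Conceptually the argument can be phrased via the functor $I\colon\Bialg(\kappa)\to\Bialg(\lambda)$ of \eqref{eq:def-I}: applying $I$ to the final $\kappa$-bialgebra $\tup{Z,\alpha',\zeta}$ produces a $\lambda$-bialgebra $\tup{Z,\alpha'\circ\tau_Z,\zeta}$ carried by the same final $F$-coalgebra. Since the $\T$-algebra structure making $\tup{Z,\zeta}$ into a $\lambda$-bialgebra is uniquely determined by coinduction, it must coincide with $\alpha$. This is the same observation as above, repackaged.

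I do not anticipate a real obstacle: once the defining property of $\alpha$ as a coinductive extension is recalled, the three ingredients used in the diagram chase are exactly the data one has from $\tau$ being a natural transformation, a monad morphism, and a morphism of distributive laws, together with the fact that $\alpha'$ is (by definition) an $F$-coalgebra morphism. The only tacit assumption, already present in the statement by virtue of mentioning $\tup{Z,\zeta}$, is the existence of a final $F$-coalgebra.
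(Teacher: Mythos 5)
Your proof is correct and follows essentially the same route as the paper: both show that $\alpha'\circ\tau_Z$ is an $F$-coalgebra morphism from $\tup{TZ,\lambda_Z\circ T\zeta}$ to $\tup{Z,\zeta}$ by pasting naturality of $\tau$, the morphism-of-distributive-laws square, and the coalgebra-morphism property of $\alpha'$, and then conclude by finality. The reformulation via the functor $I$ is a nice repackaging but, as you note, not a different argument.
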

\begin{proof}
  Consider the following diagram:
  $$
  \xymatrix{
    TZ \ar[r]^{\tau_Z} \ar[d]^{T\zeta}         & KZ \ar[r]^{\alpha'} \ar[d]^{K\zeta}   & Z \ar[dd]^{\zeta}   & TZ \ar[d]^{T\zeta} \ar[l]_{\alpha}\\
    TFZ \ar[r]^{\tau_{FZ}} \ar[d]^{\lambda_Z}  & KFZ \ar[d]^{\kappa_Z}             &                     & TFZ \ar[d]^{\lambda_Z} \\
    FTZ \ar[r]^{F\tau_Z}                       & FKZ \ar[r]^{F\kappa}              & FZ                  & FTZ \ar[l]_{F\alpha}
  }
  $$
  The upper left square commutes by naturality of $\tau$, whereas the lower left square commutes since $\tau$ is a morphism of distributive
  laws. The two rectangles commute by definition of $\alpha$ and $\alpha'$ (see Section~\ref{sec:dls}). Thus $\alpha' \circ \tau_Z$ and $\alpha$ are both
  coalgebra homomorphisms from $\tup{TZ, \lambda_Z \circ T\zeta}$ to $\tup{Z, \zeta}$ and consequently $\alpha' \circ \tau_Z = \alpha$
  by finality.
\end{proof}

\begin{example}\label{ex:CFG-lambda-3}
Continuing Example~\ref{ex:CFG-lambda-2}, it follows from Proposition~\ref{prop:algs-final} that
the algebra $\alpha \colon T\Pow(A^*) \rightarrow \Pow(A^*)$ induced by the distributive law for
$\T$ can be decomposed as $i \circ q \circ \alpha'$, where $\alpha'$ is the algebra on $\Pow(A^*)$
induced by the distributive law for $\Pow_\omega(\Id^*)$. It can be shown by induction that
$\alpha$ is the algebra on languages given by union and concatenation product. Now
$\alpha' \colon \Pow_\omega(\Pow(A^*)^*) \to \Pow(A^*)$ can be given by selecting a representative
term and applying $\alpha$, and it follows that
  $\alpha'(\mathcal{L}) = \bigcup_{L_1\cdots L_n \in \mathcal{L}} \{ w_1\cdots w_n \mid w_i \in L_i \}$.
\defiEnd
\end{example}


\section{Discussion and Conclusion}
\label{sec:conc}

We have presented a preservation condition
that is necessary and sufficient for the existence of a distributive law $\lambda'$
for a monad with equations given a distributive law $\lambda$
for the underlying free monad.
This condition consists of checking that the base equations are preserved
by $\lambda$.
Example \ref{ex:CFG-lambda} shows that presenting a monad by operations and equations and then checking that $\lambda$ preserves the equations can be much easier than describing and verifying the
distributive law requirements directly.
We demonstrated our method by applying it to obtain distributive laws
for stream calculus over commutative semirings, and
for context-free grammars which use the monad of idempotent semirings.

In~\cite{Watanabe:cmcs2002} the notion of morphisms of distributive laws is studied as a general approach
to translations between operational semantics. In this paper we investigate in detail the case of
quotients of distributive laws.
Distributive laws for monad quotients and equations are also studied in
\cite{LPW2004:cat-sos,Manes:MonadComp}.
The setting and motivation of \cite{Manes:MonadComp} is different
as they study distributive laws of one monad over another with the aim to
compose these monads. We study distributive laws of a monad
over a plain functor, a copointed functor or a comonad.
The approach in \cite{LPW2004:cat-sos} differs from ours in that
the desired distributive law is contingent on two given distributive laws and
the existence of the coequaliser (in the category of monads) which encodes
equations.
We have given a more direct analysis for monads in $\Set$ and a practical proof principle,  which covers
many known examples. We leave as future work to find out precisely how their Theorem~31
relates to our Theorem~\ref{thm:lambda'-dlaw}. 
In~\cite{ASP13} effects with equations are added to the syntax generated by a free monad $T$,
using as semantic domain a suitable final $B$-coalgebra in the Kleisli category of $T$ (assumed to 
be enriched over $\omega$-complete pointed partial-orders). To prove adequacy of the semantics with 
respect to a given operational model, the authors use a result similar to our Theorem~\ref{thm:lambda'-dlaw}. 
Their result, however, is limited to coalgebras for the functor $BX = V + X$. Moreover, since we 
work in Eilenberg-Moore categories of algebras rather than Kleisli categories of free algebras,
we do not need to require the monad (and the quotient map) to be strong.

While in this work we have focused on adding equations which already hold in the final bialgebra, it is
often useful to use equations to \emph{induce} behaviour, next to a behavioural specification
in terms of a distributive law. In process theory this idea is captured by the notion of structural congruences~\cite{MousaviR05}.
At the more general level of distributive laws there is work on adding recursive equations~\cite{Klin04}.
A study of structural congruences for distributive laws on free monads was given recently in~\cite{RB14}. While that work
focuses only on free monads, we believe that it can possibly be combined with the present work to give a more general account of
equations and structural congruences for different monads.

In the case of GSOS on labelled transition systems, proving equations to hold at the level of a specification
was considered in~\cite{AcetoCI12}, based on \emph{rule-matching bisimulations}, a refinement of De Simone's notion of
FH-bisimulation. Rule-matching bisimulations are based on the syntactic notion of \emph{ruloids}, while our technique
is based on preservation of equations at the level of distributive laws. It is currently not clear what the precise
relation between these two approaches is; one difference
is that preserving equations naturally incorporates reasoning up to congruence.

More technically, it remains an open problem whether a converse of 
Proposition~\ref{prop:tau-bialgebra-morphism} holds. For the special 
case of $\C=\Set$, such a converse has been proved by Joost Winter (in 
personal communication). We intend to investigate the general case in 
future work.




\end{document}